\titleformat{\subsubsection}[runin]
  {\normalfont\large\bfseries}{\thesubsubsection}{1em}{}
\definecolor{darkblue}{rgb}{0.0,0.0,0.3}
\setlist*[enumerate]{label=(\roman*)}
\def\boxit#1{\vbox{\hrule\hbox{\vrule\kern6pt
          \vbox{\kern6pt#1\kern6pt}\kern6pt\vrule}\hrule}}
\numberwithin{equation}{section}
\renewcommand{\det}[1]{\left| #1 \right|}
\newcommand{\x}{{\bf x}}
\newcommand{\y}{{\bf y}}
\newcommand{\E}{{\bf E}}
\newcommand{\bbeta}{\boldsymbol{\beta}}
\newcommand{\ben}{\begin{enumerate}}
\newcommand{\een}{\end{enumerate}}
\newcommand{\beq}{\begin{equation}}
\newcommand{\eeq}{\end{equation}}
\newcommand{\bde}{\begin{description}}
\newcommand{\ede}{\end{description}}
\newcommand{\abs}[1]{\lvert#1\rvert}
\newcommand{\vectornorm}[1]{\left|\left|#1\right|\right|}
\newcommand{\NormRV}{\mathcal{N}}
\newcommand{\bone}{{\bf 1}}
\newcommand{\iid}{\stackrel{\mathrm{iid}}{\sim}}
\newtheoremstyle{slplain}
  {1\baselineskip\@plus.2\baselineskip\@minus.2\baselineskip}
  {.5\baselineskip\@plus.2\baselineskip\@minus.2\baselineskip}
  {\slshape}
  {}
  {\bfseries}
  {.}
  { }
  {}
\theoremstyle{slplain}
\newtheorem{theorem}{Theorem}
\newtheorem{lemma}[theorem]{Lemma}
\newtheorem{proposition}[theorem]{Proposition}
\newtheorem{remark}[theorem]{Remark}
\def\mathcolor#1#{\@mathcolor{#1}}
\def\@mathcolor#1#2#3{%
  \protect\leavevmode
  \begingroup
    \color#1{#2}#3%
  \endgroup
}
\newenvironment{ans}{\color{blue} }{\color{black} }
\def \bans{\begin{ans}}
\def\eans{\end{ans}}
\newenvironment{grey}{\color{gray} }{\color{gray} }
\def\bgrey{\begin{grey}}
\def\egrey{\end{grey}}
\title{On Posterior consistency of Bayesian Changepoint models}
\author{Nilabja Guha \\ University of Massachusetts Lowell \\
Jyotishka Datta \\ Virginia Polytechnic Institute and State University}
\date{}
\begin{document}
\maketitle

\begin{abstract}

While there have been a lot of recent developments in the context of Bayesian model selection and variable selection for high dimensional linear models, there is not much work in the presence of change point in literature, unlike the frequentist counterpart. We consider a hierarchical Bayesian linear model where the active set of covariates that affects the observations through a mean model can vary between different time segments. Such structure may arise in social sciences/ economic sciences, such as sudden change of house price based on external economic factor, crime rate changes based on social and built-environment factors, and others. Using an appropriate adaptive prior, we outline the development of a hierarchical Bayesian methodology that can select the true change point as well as the true covariates, with high probability. We provide the first detailed theoretical analysis for posterior consistency with or without covariates, under suitable conditions. Gibbs sampling techniques provide an efficient computational strategy. We also consider small sample simulation study as well as application to crime forecasting applications.  

\end{abstract}

\section{Introduction}

In many applications such as economics, social science, the observed variable depends on covariates through mean structure, where the mean structure changes with time, based on changes in some latent unobserved factor such as an economic phenomenon/public policy change \citep{datta2019bayesian}. The dependence on the covariate may be local (only in some time segments) or global. Selecting the change point and covariates consistently remains an important problem which provides an insight to underlying sociological and economical factors. 

There is a huge and influential literature, both Bayesian and frequentist, on changepoint detection with application in diverse areas that dates back several decades. Initial attempts for changepoint detection using cumulative sums date back to \citet{page1955test,page1957problems}. Shortly after, changepoint for the location parameter, primarily within the Gaussian observation model, was studied by several authors including \citet{chernoff1964estimating,gardner1969detecting,srivastava1981tests, sen1973multivariate, smith1975bayesian, sen1980asymptotic}. The problem of multiple changepoints were addressed by several people, notably \citet{talwar1983detecting, stephens1994bayesian, chib1998estimation}. \citet{vostrikova1981detecting} introduced the popular binary segmentation method that recursively partitions the observation window to estimate the number of changepoints. The early survey by \citet{zacks1983survey} provides a detailed account of these early innovations. Early Bayesian methodological contributions include \citet{carlin1992hierarchical}, who provide a hierarchical Bayes framework for changepoint model with applications to changing regressions and changing Markov structures and \citet{raftery1994change}, who propose a Markov Transition Distribution model and provide Bayes factors for testing whether a change-point has occurred in a given segment. \citet{csorgo1997limit} provides a detailed review of changepoint methods and some new contributions based on likelihood based tests. There has been a resurgence of changepoint literature in the last decade with a renewed focus on both running time \citep[e.g.][]{killick2012optimal} and suitable handling of multi-resolution and multidimensional nature of modern experiments and data collection routines \citep[e.g.][]{frick2014multiscale, fryzlewicz2014wild}. 

It is worthwhile to note that the majority of Bayesian changepoint estimation literature considers an offline, retrospective approach while online changepoint detection methods are somewhat more prevalent in the frequentist regime, as pointed out by \citet{adams2007bayesian}, who propose an online Bayesian method based on \textit{recursive run length estimation}. 

With high-throughput data becoming routine in modern scientific studies, the problem of variable selection in a high-dimensional changing linear model becomes important where a key inferential goal is to identify the potentially different set of `active' variables within each segment. Recent papers that address this problem include frequentist approach such as \citet{lee2016lasso}, and Bayesian treatment in \citet{datta2019bayesian}. In \citet{lee2016lasso}, a lasso penalization approach is used for changing high-dimensional linear regression while selecting relevant regressors under sparsity assumption. While in \citet{datta2019bayesian}, this is handled by using the shrinking and diffusing prior \citep{narisetty2014bayesian} in each segment for variable selection. As \citet{datta2019bayesian} points out, the decomposability of the likelihood for changing linear regression model also makes it easy to incorporate other Bayesian variable selection priors: for example, one could use a spike-and-slab prior \citep{mitchell88} or a variety of shrinkage priors that have become quite popular for sparse variable selection and estimation \citep{polson2010shrink, bhadra2019lasso}. 

\subsection{Recent theoretical results} As the main focus of this article is theoretical guarantees, it is worthwhile to briefly mention a few recent theoretical results that are relevant to our present discourse. Roughly speaking, the recent theoretical advances can be broadly classified into two major thrusts: (a) optimality properties for the estimator for the underlying piecewise mean and (b) optimality properties for the estimator for changepoint locations. 

For the first kind, \citet{gao2017minimax} established sharp nonasymptotic risk bounds for least squares estimators when the underlying mean has a piecewise constant structure, and observed a phase change phenomenon when the number of changepoints goes beyond $2$. Let $\Theta_k$ denotes the model with all piecewise constant $\theta$ with maximum $k-1$ changepoints and $\hat{\theta}(\Theta_k)$ is the least squares estimator (LSE) under this model. Now consider a possibly misspecified LSE $\hat{\theta}(\Theta_k)$, when the true $\theta^0 \in \Theta_{k_0}$. \citet{gao2017minimax} provided sharp risk bounds that are minimax when $k = k_0$, i.e. $\inf_{\hat{\theta}} \sup_{\theta_0 \in \Theta_{k_0}} \E \vectornorm{\hat{\theta}-\theta_0} \asymp \sigma^2 \{ 1 + \log \log n \bone(k = 2) + k \log(e n/k) \bone(k > 2) \}$. \cite{martin2017asymptotically} developed an efficient empirical Bayes strategy for the piecewise constant sequence model and showed that the resulting posterior distribution attains a similar optimal rate as in \citet{gao2017minimax}. Theorem 1 of \citet{martin2017asymptotically} states that under a data-driven prior on the elements of $\theta$ and a truncated geometric prior distribution on the number of changepoints, the empirical Bayes posterior distribution $\Pi^n$ of $\theta \in \mathbb{R}^n$ will satisfy: $\sup_{\theta_0} \E_{\theta_0} \Pi^n( \{\theta \in \mathbb{R}^n : \vectornorm{\theta - \theta_0}^2 > M_n \epsilon_n(\theta_0) \}) \to 0, \quad n \to \infty$, where $M_n$ is any sequence with $M_n \to \infty$, and $\epsilon_n(\theta_0)$ is the target rate similar to the one in \citet{gao2017minimax}.
\citet{martin2017asymptotically} point out that the concentration rate for the empirical Bayes posterior distribution will have one phase transition from $k = 1$ to $k \ge 2$, unlike two phase transitions noted by \citet{gao2017minimax}, and conjecture that this phenomenon might be a characteristic of all Bayesian approaches for piecewise constant changepoint detection. \cite{liu2019minimax} extends \citet{gao2017minimax} to the case of multidimensional change-point detection, where the location $\theta \in \mathbb{R}^{p \times n}$ can change in at most $s$ out of $p$ coordinates at some time-point $t_0 \in \{1, \ldots, n\}$.  

For estimating the location of change points involving exponential families, \citet{frick2014multiscale} proposed the multiscale estimator SMUCE that attains the minimax rate $O(n^{-1})$ up to a logarithmic factor. \citet{frick2014multiscale} also constructed asymptotically honest confidence sets for the number and location of change points, and provided sufficient conditions for the SMUCE method to detect change points with probability approaching $1$ in the presence of `vanishing signals' for $n \to \infty$.

\subsection{Our contributions and outline}

Despite these remarkable advances, there is essentially no theoretical guarantees for changepoints in high-dimensional linear models concerning consistency in model selection. Here we provide the following theoretical substantiations:
\ben
\item We show that under the default Bayesian hierarchy, it is possible to recover both the true change point locations and the true non-zero covariates with high probability under mild conditions on the covariates and the maximum model size. 
\item Specifically, we prove formal posterior consistency results for model selection and change point recovery via Bayes factor for both piecewise constant model as well as changing high-dimensional linear regression. We also prove that the minimax rate of $O(n^{-1})$ \citep{frick2014multiscale} is attained by the Bayes estimators for change point recovery. 
\item Finally, we show that the empirical Bayes estimator attains the same optimal rate of convergence as the full Bayes solution, under the assumption of same, but unknown, error variance $\sigma^2$ across different segments. 
\een
To our knowledge, this is the first theoretical substantiation of the superior performance of Bayesian methods in this specific methodological context.

%

\section{Mathematical/Asymptotic Framework}\label{sec:framework}

Consider the canonical high-dimensional regression set-up with an $n$-dimensional response $y$ and an $n \times p$ design matrix $X$, with $p \gg n$, where $y_i \sim \NormRV(\x_i'\bbeta_j,\sigma^2)$ for covariate vector $\x_i$  at time point $t_i$, $i=1,\dots,n$. Let $\bbeta_{1}, \bbeta_{2}\dots, \bbeta_{l}$;  $\bbeta_{i}\neq \bbeta_{i+1}$, $i=1,\cdots,l-1$ are the values of the coefficient vector, where $\bbeta_{k+1}$ is the value of the coefficient vector between $nt_k < i\leq\lfloor nt_{k+1}\rfloor, k=1,\dots,l-1$, and $t_1,\dots,t_{l-1}$ are locations of change points and $t_0=0$, and let $\beta_{j,m}$ be the $m$ th component of $\bbeta_j$. 
 
\subsection{Changing linear model with variable selection}

\textbf{Spike-and-slab priors:} For the changing linear regression, we want to incorporate covariates in the model with selection of relevant predictors for each time segment between two changepoints. Our aim is to simultaneously select the true non-zero covariates as well as infer the correct number and positions of the changepoints. Our framework allows for using different priors that enable variable selection. The natural Bayesian solution is to put a spike-and-slab prior on $\beta_j$ that will ensure selection of covariates \citep{mitchell88}. 

Let $I_i^y$ be the indicator function associated with $y_i$ where $I_i^y=1$ denotes a change point at the $i^{th}$ epoch/location. Given $I_i^y$'s $i = 1, \ldots, n$, let $l=1+\sum_{i=1}^n I_i^y$ be the number of partition based on change points and $P_j$ denote the $j^{th}$ partition. The hierarchical model can be written as: 
\begin{align}
&y_i \mid \bbeta_j, \sigma^2   \sim \NormRV(\x_i' \bbeta_{j},\sigma^2); \; i \in P_j; \quad \pi(\sigma^2) \propto \frac{1}{\sigma^2}; \nonumber\\
&\beta_{j,m} \mid I_m^\beta, \pi, \tau_j^2  \sim (1-I_m^\beta) \delta_{\{0\}} + I_m^\beta \NormRV(0, \tau_j^2),  \text{ for } i \in P_j, \; j = 1,\ldots,l; \nonumber\\
& I_m^\beta  \sim \text{Bernoulli}(\tilde{p}_m), \text{ and }  I_i^y \sim \text{Bernoulli}(p_n) \label{model2}
\end{align}
where  $I_i^y$'s are independent Bernoulli indicators of whether the $i^{th}$ observation $y_i$ is associated with a `change-point', and $I_m^\beta$'s are independent indicator Bernoulli random variables, indicating whether the $m^{th}$ covariate is included in the true model, or analogously if the $m^{th}$ parameter in the $j^{th}$ partition, $\beta_{j,m}$ is non-zero. 


Here the variable selection can be done via the posterior inclusion probability (PIP) for each $\beta_j$ within each time segment. The inclusion probability $P(\beta_j \ne 0 \mid \y)$ is used to select the relevant predictors. Spike-and-slab priors have proven optimality properties for high-dimensional linear models as shown by \citep{castillo2015bayesian}, although they come with a higher computational burden due to the need for exploring a high-dimensional parameter space. Alternatively, the global-local shrinkage priors \citep{polson2010large, bhadra2019lasso} that have been proven optimal for variable selection \citep{datta2013asymptotic,ghosh2016testing} can be used. 

The simpler case of a Gaussian sequence model with no covariates is presented first to build the intuition, which will later be generalized for covariates.  This leads to the following hierarchical model:
\begin{align}
&y_i  \mid \theta_i,\sigma^2\sim  \NormRV(\theta_i,\sigma^2); \quad \pi(\sigma^2) \propto \frac{1}{\sigma^2}; \nonumber\\
&\theta_i \mid \mu_j,\tau^2  \sim \NormRV(\mu_j, \tau^2),  \text{ for } i \in P_j, j=1,\dots,l; \nonumber\\
&\mu_j \sim   \NormRV(0,V), \; V \gg 1 \text{ and } \pi(\tau^2 ) \propto \frac{1}{\tau^2}; \quad I_i^y \sim \text{Bernoulli}(p_n); \label{model1}
\end{align}
where $I_i^y$ is the indicator whether $y_i$ is a change point, $\mu_j$'s are independent given $I_i^y$'s, $\theta_i$'s are independent  given $\mu_j$'s and $\tau^2$, and $y_i$'s are independent given $\theta_i$ and $\sigma^2$. It is assumed that  $p_n \sim n^{-1}c_n$. For $n$ observations the expected number of change points $E[\sum_{i=1}^n I_i]=c_n$ is therefore of the order of $c_n$. Here, $c_n$ controls the  expected numbers of  change points apriori, and letting $c_n$ increase to infinity provides flexibility to model multiple and potentially large number of change points for large $n$.

\section{Theoretical Properties}\label{sec:theory}

In this section, we establish the change-point detection consistency, in presence of covariate, and establish the rates of convergence. Let $l^*$ be the true number of change-point and $l^*+1$ be the corresponding number of the partitions. Let $m_\beta^*$ denote the true model for the covariate combination. Let $M^*=M^*_{l_{t^*}^*,{m_\beta^*}}=M^*_{l^*,{m_\beta^*}}$ be denote the model with both true covariate combination and true change-point location, where for convenience we drop the suffix $t^*$ in $l^*_{t^*}$. Similarly, $M^*_{l^*,{\tilde{m}_\beta^*}}$  be a model with true change-point locations and $\tilde{m}_\beta^*\supset {m}_\beta^*$. For a generic model with change-point locations $\underline{t}_l=(t_1,\cdots,t_{l})$ and the covariate combination $m_\beta$, we write as $M_{l,{m}_\beta}^{\underline{t}_l}$ or $M_{l,{m}_\beta}$ dropping the index $\underline{t}_l$ for notational convenience.

For a model $M_{l,{m}_\beta}$ and for partition $P_j$ corresponding to change points $t_{j-1}$, and $t_j$,  let $X_j$ be the corresponding design matrix. Let $q_n=o(n)$ be the maximum number of covariates permissible in a model. Then, we can state the following relatively straightforward result about the marginal distribution.

\begin{proposition}
Under model given in \eqref{model2} we have for $\tau_j^2=O(1)$, and known $\sigma^2$
\begin{equation}
\log L({\bf Y}_n \mid M_{l,m_\beta})=-\frac{1}{2}\sum_j\log(\det{X_j'X_j})-\frac{1}{2\sigma^2}\sum_j\sum_{i\in P_j}(y_i-x_i\hat{\beta}^j)^2+c_{l,n}+O(1)
\label{margcov1}
\end{equation}
where $\hat{\beta}^j$ is the least squares estimator of the coefficient vector based only on the observations lying in $P_j$, and $c_{l,n}=-\frac{n}{2}\log \sigma^2+c_l$, $c_l=O(kl')$ where $k=\#\{m_\beta\}$ denotes the size of the model $m_{\beta}$.
\label{prop1}
\end{proposition}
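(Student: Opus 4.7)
My plan is to reduce the marginal likelihood to a product of per-partition Gaussian integrals and then expand each one to leading order. Conditional on the model $M_{l,m_\beta}$, the vectors $\bbeta_{1},\ldots,\bbeta_{l}$ are \emph{a priori} independent, and once we restrict to the active coordinates each $\bbeta_j$ is $\NormRV(0,\tau_j^2 I_{k_j})$. Because the likelihood factorises across partitions, we get $L(\Y_n\mid M_{l,m_\beta})=\prod_{j=1}^{l}\int p(\y^{(j)}\mid\bbeta_j,\sigma^2)\,\pi(\bbeta_j)\,d\bbeta_j$, and each factor is a textbook Gaussian–Gaussian integral.

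For each $j$, completing the square gives
\[
\log\!\int\! p(\y^{(j)}\mid\bbeta_j)\pi(\bbeta_j)\,d\bbeta_j
=-\tfrac{n_j}{2}\log(2\pi\sigma^2)-\tfrac{k_j}{2}\log\tau_j^2-\tfrac{1}{2}\log\det(A_j)+\tfrac{k_j}{2}\log(2\pi)-\tfrac{1}{2\sigma^2}\|\y^{(j)}\|^2+\tfrac{1}{2}\bmu_j'A_j\bmu_j,
\]
where $A_j=X_j'X_j/\sigma^2+I/\tau_j^2$ and $\bmu_j=\sigma^{-2}A_j^{-1}X_j'\y^{(j)}$. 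The next step is a Laplace-type expansion of the two $A_j$-dependent pieces around the vague-prior limit $I/\tau_j^2\to 0$: writing $A_j=\sigma^{-2}X_j'X_j(I+\sigma^2(X_j'X_j)^{-1}/\tau_j^2)$ and using eigenvalue control of $X_j'X_j$ (the usual regularity that $\lambda_{\min}(X_j'X_j)$ grows with $n_j$, so that $\sigma^2/(\lambda_i\tau_j^2)=o(1)$) yields $\log\det(A_j)=\log\det(X_j'X_j)-k_j\log\sigma^2+O(k_j)$ and, similarly, $\bmu_j'A_j\bmu_j=\sigma^{-2}\hat{\bbeta}^{j\prime}X_j'X_j\hat{\bbeta}^{j}+O(k_j)$. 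Substituting and using $\|\y^{(j)}\|^2-\hat{\bbeta}^{j\prime}X_j'X_j\hat{\bbeta}^{j}=\sum_{i\in P_j}(y_i-x_i\hat{\bbeta}^j)^2$ recovers the per-partition residual-sum-of-squares term.

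Summing the pieces over $j=1,\dots,l$ assembles the expression in \eqref{margcov1}. The $-\tfrac{1}{2}\sum_j\log\det(X_j'X_j)$ and $-\tfrac{1}{2\sigma^2}\sum_j\sum_{i\in P_j}(y_i-x_i\hat{\bbeta}^j)^2$ terms appear as the leading contributions, while the $-\tfrac{n_j}{2}\log\sigma^2$ terms aggregate to $-\tfrac{n}{2}\log\sigma^2$. The remaining deterministic contributions—$\sum_j\tfrac{k_j}{2}\log\sigma^2$, $-\sum_j\tfrac{k_j}{2}\log\tau_j^2$, the $(2\pi)$ constants, and the per-partition Laplace remainders—are each at most of order $k_j$, and since $k_j\le k$ over $l$ partitions they sum to $O(kl)$; this is exactly what is absorbed into $c_{l,n}=-\tfrac{n}{2}\log\sigma^2+c_l$ with $c_l=O(kl)$. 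The genuinely $O(1)$ residual is what remains after these bookkeeping collections.

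The main obstacle is controlling the Laplace remainder uniformly: one has to argue that the higher-order terms in the Neumann expansion of $A_j^{-1}$ and the corresponding correction to $\log\det(A_j)$ are not just individually $o(1)$ in $n_j$ but aggregate, across the $l$ partitions and $k_j$ active coordinates, into a bound that sharpens to $O(1)$ beyond the explicit $O(kl)$ book-kept into $c_l$. This is the step that forces an implicit assumption on $\lambda_{\min}(X_j'X_j)$ (standard in the high-dimensional linear-model literature) together with $\tau_j^2=O(1)$; with those in hand the accounting closes and \eqref{margcov1} follows.
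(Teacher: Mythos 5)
Your proposal is correct and follows essentially the same route as the paper's proof: per-partition Gaussian--Gaussian marginalization followed by a Neumann-type expansion of $(X_j'X_j+\sigma^2\tau_j^{-2}I)^{-1}$ around the flat-prior limit, with the determinant and quadratic-form corrections controlled by the eigenvalue growth of $X_j'X_j$ (the paper's condition (A6)) and $\tau_j^2=O(1)$, then absorbed into $c_l=O(kl')$ and the $O(1)$ remainder. The only cosmetic difference is that the paper factors $X_j'X_j=AA$ and bounds each Neumann term of the quadratic form by $n^2 n^{-k-1}$ to get an $O(1)$ correction per partition, whereas you book-keep the same corrections as $O(k_j)$; both are subsumed by the stated error terms.
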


\begin{remark}
For $\tau_j^2 \gg 1$ in \eqref{model2}, we have
\[\log L({\bf Y}_n|M_{l,m_\beta})=-\frac{1}{2}\sum_j\log( \det{X_j'X_j})-\frac{1}{2\sigma^2}\sum_j\sum_{i\in P_j}(y_i-x_i\hat{\beta}^j)^2+c'_{l,n},\]
where $c'_{l,n}=-\frac{n}{2}\log \sigma^2+c'_l$, $c'_l=O(kl')$. Using the above mentioned result, it will be sufficient to prove our result  for the flat normal  prior on the coefficients, which will help us  streamline the derivation of the  following results.
\end{remark}

Since many important applications as well as recent developments such as \citet{frick2014multiscale} use the equal and known variance set-up: we establish our result first for this set-up. Then we extend our results to the case of unknown variance.
%

For a model $M_{l^*,m_\beta}$ with the same number of change point as the true model let $P_j$ be the $j^{th}$ partition of $i=1,\cdots, n$. Let $P_{j^*}$ be the $j^{th}$ partition for $M^*$.  Suppose,  $X_{j\cap j^*}$ be the design matrix corresponding to $P_j\cap P_j^*$ and covariate combination $m_\beta\cup m_{\beta^*}$, and   $X_{j- j^*}$ corresponds to $P_j-P_j^*$, $X_{j\cap k^*}$ corresponds to $P_j \cap P_{k}^*$,   and  $\beta^*_i$ be the true coefficient vector for $y_i$ and  $\beta^*_i={\beta^*}^j$ for $P_j \cap P_j^*$. Note that the design matrices are constructed when $P_j\cap P_j^*, P_j-P_j^*, P_j\cap P_k^*$ are non empty, respectively.  Let  $\hat{\beta}^j$ be the least square estimator based on $P_j$ based on model covariate choice given by $m_\beta$, and the entries corresponding to $m_\beta \cup m_{\beta^*}-m_\beta$ is zero. Similarly, ${\tilde{\beta}}^*_i$ and ${\tilde{\beta}}^{*j}$ is defined by filling the entries not  in $m_{\beta^*}$ by zero.

Then, it follows after some calculations, 

\begin{equation}
\begin{split}
P({\bf Y}_n \mid M_{l^*,m_\beta}) & =-\frac{1}{2}\sum_j\log(\abs{X_j'X_j})-\frac{1}{2\sigma^2}[\sum_{i}(y_i-\theta^*_i)^2+\sum_{j:P_j\cap P_j^*\neq \phi }({\tilde{\beta}}^{*j}-E[\hat{\beta}^j])'X_{j\cap j^*}'X_{j\cap j^*}({\tilde{\beta}}^{*j}-E[\hat{\beta}^j]) \\ 
& +\sum_{j}\sum_{k:k\neq j; P_j\cap P^*_k\neq \phi}({\tilde{\beta}}^{*k}-E[\hat{\beta}^j])'X'_{j\cap k^*}X_{j\cap k^*}(\tilde{\beta}^{*k}-E[\hat{\beta}^j]) \\
&+\sum_j(\hat{\beta}^j-\E[\hat{\beta}^j])'X_j'X_j(\hat{\beta}^j-E[\hat{\beta}^j])] -\frac{1}{\sigma^2}[E_1+E_2+E_3]+c_{l,n}+O(1) 
\end{split}
\label{decomp0}
\end{equation}

where, $E_1,E_2,E_3$ are the cross product terms, and $\theta_i^*$ is the true value of $E[y_i]$. Hence, accounting for the bias terms and bounding the cross product terms we can show the Bayes factor consistency if the proposed partition is not a refinement of the true partition for some covariate combination. If $l>l^*$ then the expression similar to in equation \ref{decomp0} can be derived by comparing $M_{l,m_\beta}$ with model with  partition/change points corresponding to a refinement of the true partition, with $l$ change points. A refinement of the true partition is a partition corresponding to change points $\tilde{t}_1,\dots,\tilde{t}_{l}$ , with $l>l^*$ and $t^*_i\in \{\tilde{t}_1,\dots,\tilde{t}_{l}\}$, for $i=1,\dots, l^*$.

If the proposed partition is a refinement of the true partition, and the covariate combination contains the true set of covariates, then we will have terms asymptotically similar to Schwarz's BIC \citep{schwarz1978estimating} which will guarantee Bayes factor consistency. 

Next, we make the following assumptions on covariates and model size:

\begin{itemize}

\item[(A1)] The covariates $X_j$'s are uniformly bounded.

\item[(A2)] Coefficients $\beta_j$'s are uniformly bounded below and above.

\item[(A3)]  {$(q_n\log n)^2\prec n\epsilon_n$} and {$q^2_n\preceq {n}^c, c<0.5$}  ; for a sequence $\epsilon_n\rightarrow 0$ such that $n\epsilon_n\rightarrow \infty$. For two positive sequences $\{a_n\}_{n\geq 1},\{b_n\}_{n\geq 1}$, $a_n \prec b_n$, if $a_n/b_n \rightarrow 0$ as $n\uparrow\infty$. 

\item[(A4)] {$\tau_j^2=V_n=O(1)$}.
\item[(A5)] {$0<t_1^*<\cdots<t_{l^*}^*<1$} are the true locations of change points (i.e. $  nt_j^*\leq \tau^*_j \leq nt_j^*+1$).

\item[(A6)]  Let $\tilde{P}_1,\tilde{P}_2,\cdots, \tilde{P}_k$ be $k$ disjoint subset of $\{1,\dots,i\dots,n\}$ of size $n_1$, $n_2,\cdots, n_k$. Let $X_1, X_2 \cdots, X_k$ be design matrices corresponding to observations indexed by $\tilde{P}_1\cdots \tilde{P}_k$ for some model based on $q_n=o(n)$ many coefficients.

For, $n_1,\dots, n_k\rightarrow \infty, q_n/n_1,\dots, q_n/n_k\rightarrow 0$,  assume { $an_i<\lambda_{A}<bn_i, a \frac{n_1}{n_1+n_2+\cdots+n_k}<\lambda_C<b\frac{n_1}{n_1+n_2+\cdots+n_k}$} where $\lambda_X$ is singular value of $X$  and,  $A=X_i'X_i$ and $C=(X_1'X_1+X_2'X_2+\cdots+X_k'X_k)^{-1}X_i'X_i$, and $a,b>0$.
\end{itemize}

We can now state the following result for the number of change points:
\begin{theorem}
Let $0<t_1^*<\cdots<t_{l^*}^*<1$ be the locations of true change points, $M^*$ be the true model corresponding to true covariates and true change points, and $0<t_1<\cdots<t_l<1$ be the locations  of change points for the alternative with  $l \neq l^*$. Let $M_l$ be the corresponding model with change points at $t_1<\cdots<t_l$ and  some covariate combinations $m_\beta$ with at most $q_n$ many covariates. Then under $A1-A6$, for \eqref{model2}  \[ BF(M_{l,m_\beta},M^*)\rightarrow 0 \text{ in probability as } n \rightarrow \infty .\] 
\label{thm1}
\end{theorem}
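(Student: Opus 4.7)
The plan is to use Proposition~\ref{prop1} together with the Bernoulli prior on $I_i^y$ to write
\begin{equation*}
\log BF(M_{l,m_\beta}, M^*) = \log L(\mathbf{Y}_n \mid M_{l,m_\beta}) - \log L(\mathbf{Y}_n \mid M^*) + \log \frac{\pi(M_{l,m_\beta})}{\pi(M^*)},
\end{equation*}
and show the right-hand side tends to $-\infty$ in probability, treating the cases $l < l^*$ and $l > l^*$ by complementary arguments that both rest on the bias-variance-cross-product decomposition \eqref{decomp0} (and its direct analogue when the proposed model has $l \neq l^*$ change points).

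\emph{Case 1: $l < l^*$.} A partition with $l < l^*$ cells cannot be a refinement of the true partition, so by pigeonhole at least one proposed segment $P_j$ contains observations drawn from two adjacent true segments with distinct coefficient vectors $\beta^{*k} \neq \beta^{*k+1}$. Consequently $\hat{\beta}^j$ is asymptotically biased and the dominant nonnegative contribution to $\log L(\mathbf{Y}_n \mid M^*) - \log L(\mathbf{Y}_n \mid M_{l,m_\beta})$ is the bias-squared piece
\begin{equation*}
\frac{1}{2\sigma^2}\sum_j \sum_{k:\, P_j \cap P_k^* \neq \emptyset} (\tilde{\beta}^{*k} - E[\hat{\beta}^j])' X_{j \cap k^*}' X_{j \cap k^*} (\tilde{\beta}^{*k} - E[\hat{\beta}^j]).
\end{equation*}
By A2 (true coefficients bounded away from $0$), A1 (bounded covariates), and A6 (eigenvalues of $X_{j \cap k^*}' X_{j \cap k^*}$ scale linearly with the overlap size), this sum is at least $c_0 n$ for some $c_0 > 0$, where the linear-in-$n$ lower bound uses A5 to guarantee the mismatched overlap has length of order $n$. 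All remaining ingredients---the log-determinant difference, $c_{l,n} - c_{l^*,n}$, the stochastic cross terms $E_1, E_2, E_3$, and the prior-ratio factor (whose log is at most $(l^* - l)\log(n/c_n) = O(\log n)$)---are $o_p(n)$ under A1, A3, and subgaussian concentration for the noise. Hence $\log BF \to -\infty$ in probability.

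\emph{Case 2: $l > l^*$.} Introduce an auxiliary reference model $\tilde{M}$ having $l$ change points formed by refining the true partition (keeping all $l^*$ true change points and adding $l - l^*$ spurious ones at convenient interior positions) and using the true covariate set $m_\beta^*$. Decompose
\begin{equation*}
\log BF(M_{l,m_\beta}, M^*) = \log BF(M_{l,m_\beta}, \tilde{M}) + \log BF(\tilde{M}, M^*).
\end{equation*}
If $M_{l,m_\beta}$'s partition is not a refinement of the true partition or $m_\beta \not\supseteq m_\beta^*$, then the Case 1 bias argument (applied now at the common level $l$ with $\tilde{M}$ serving as reference) forces $\log BF(M_{l,m_\beta}, \tilde{M}) \to -\infty$. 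Otherwise, the proposed partition is a refinement with true covariates included, the residual sums of squares in $\log BF(\tilde{M}, M^*)$ are asymptotically identical (both models fit the same piecewise mean), and the surplus arises from (i) the additional log-determinant terms contributing an $O(q_n \log n)$ Schwarz/BIC-type penalty per extra segment and (ii) the prior ratio $(p_n/(1-p_n))^{l - l^*} \sim (c_n/n)^{l - l^*}$ contributing an additional $(l - l^*)\log(n/c_n)$ penalty. Both drive $\log BF(\tilde{M}, M^*) \to -\infty$.

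\emph{Main obstacle.} The chief technical hurdle is obtaining \emph{uniform} control of the stochastic cross terms $E_1, E_2, E_3$ over the combinatorially many alternative models---different change-point locations and covariate subsets of size at most $q_n$. One must show these remain $o_p(n)$ uniformly, combining subgaussian concentration with a union bound whose cardinality (of order $\binom{n}{l}\binom{p}{q_n}$) is tamed by the growth condition A3. A secondary subtlety in Case 1 is verifying that at least one mismatched overlap $P_j \cap P_k^*$ has length of order $n$ rather than $o(n)$, which follows from A5 placing the true change points at fixed positive fractions of $n$ apart, so that after pigeonholing into only $l < l^*$ cells at least one overlap must be of linear order.
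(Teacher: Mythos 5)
Your proposal is correct and follows essentially the same route as the paper: a case split on $l<l^*$ versus $l>l^*$, a pigeonhole/bias argument via the decomposition \eqref{decomp0} giving an order-$n$ gap when the proposed partition is not (close to) a refinement or omits true covariates, and a chain through an auxiliary refinement model with a BIC-type log-determinant penalty when it is. Two minor notes: the Bayes factor is the ratio of marginal likelihoods, so the prior-odds term you add belongs to the posterior odds rather than $BF$ (it is harmless here since it is $O(\log n)$), and the uniformity over all covariate/change-point choices you flag as the chief obstacle is not actually needed for this pointwise statement --- it is deferred to the later theorems on summed posterior ratios.
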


Let $0<t_1^*<\cdots<t_{l^*}^*<1$ be the locations of true change points and $M^*$ be the true model corresponding to true covariates and true change points. Let $0<t_1<\cdots<t_{l^*}<1$ be the location  of change points for an alternative model with some covariate combination and assume $\max_i |t_i^*-t_i|>\epsilon_n$ . Let $M^{\epsilon_n}_{l^*}$ be the corresponding model with change points at $t_1,\cdots,t_{l^*}$ and  some covariate combinations $m_\beta$ with at most $q_n$ many covariates. Then, we show that even for $\epsilon_n\sim n^{-1}$ up to some log factors, if $q_n$ increases in logarithmic rate, we have consistency., where  for  two positive sequences $\{a_n\}_{n\geq 1},\{b_n\}_{n\geq 1}$, $a_n\sim b_n$ if $a_n/b_n$ is bounded away from zero and infinity. In particular:

\begin{theorem}
Under $A1-A6$, for \eqref{model2}   \[BF(M^{\epsilon_n}_{l^*,m_\beta}, M^*) \rightarrow 0 \text{ in probability as } n \to \infty .\] 
\label{thm2}
\end{theorem}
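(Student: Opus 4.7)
The plan is to follow the same decomposition-based approach as in the discussion preceding Theorem \ref{thm1}, but to exploit the quantitative lower bound on the bias that arises from the constraint $\max_i |t_i - t_i^*| > \epsilon_n$. Applying \eqref{decomp0} to $\log L(\mathbf{Y}_n \mid M^{\epsilon_n}_{l^*, m_\beta})$ and subtracting the corresponding expression for $M^*$, in which all bias and cross-product terms vanish identically, the log Bayes factor takes the form
\[ \log BF(M^{\epsilon_n}_{l^*, m_\beta}, M^*) = -\tfrac{1}{2} D_n - \tfrac{1}{2\sigma^2} B_n - \tfrac{1}{\sigma^2} R_n + O(1), \]
where $D_n$ is the log-determinant penalty difference, $B_n$ collects the squared-bias quadratic forms indexed by the $(j, k^*)$ overlaps in \eqref{decomp0}, and $R_n$ collects the stochastic cross-product terms $E_1 + E_2 + E_3$. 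The argument reduces to showing $B_n \gtrsim n\epsilon_n$ while $D_n + |R_n| = o_p(n\epsilon_n)$, which together force $\log BF \to -\infty$ in probability.

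The crucial step is the lower bound on $B_n$. Since the alternative has exactly $l^*$ change points but at least one location differs from the truth by more than $\epsilon_n$, a simple pigeonhole argument produces indices $j$ and $k \neq j$ with $|P_j \cap P_k^*| \geq \lfloor n\epsilon_n \rfloor$. Let $k_0 = \arg\max_{k'} |P_j \cap P_{k'}^*|$ denote the dominant true segment overlapping $P_j$; because $E[\hat{\beta}^j]$ is a design-weighted average of the true segmental coefficients $\tilde{\beta}^{*k'}$ with weights proportional to the overlap sizes (after adjusting via $(X_j' X_j)^{-1}$), it lies in a vanishing neighborhood of $\tilde{\beta}^{*k_0}$. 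For $k \neq k_0$, assumption A2 then gives $\| \tilde{\beta}^{*k} - E[\hat{\beta}^j] \| \gtrsim \| \tilde{\beta}^{*k} - \tilde{\beta}^{*k_0} \| \gtrsim 1$, and the singular-value bound in A6 applied to $X_{j \cap k^*}$ yields a quadratic form of order $n\epsilon_n$. Hence $B_n \gtrsim n\epsilon_n$.

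The competing terms are controlled as follows. By A1, $\log|X_j' X_j| = O(q_n \log n)$, so $D_n = O(l^* q_n \log n) = o(n\epsilon_n)$ under A3, since $l^*$ is fixed and $q_n \log n = o(\sqrt{n\epsilon_n}) = o(n\epsilon_n)$ because $n\epsilon_n \to \infty$. Each cross-product in $R_n$ is a centered linear combination of Gaussian errors with conditional variance of order $\sigma^2 B_n$, so Gaussian tail bounds combined with a union bound over the $O(n^{l^*})$ candidate change-point configurations and the $\binom{p}{q_n}$ candidate covariate combinations give $|R_n| = O_p(\sqrt{B_n \cdot q_n \log n}) = o_p(B_n)$ under A3. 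Combining yields $\log BF \to -\infty$ in probability and hence $BF \to 0$.

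The main obstacle is the sharpness of the lower bound on $B_n$. The natural quantity $\| \tilde{\beta}^{*k} - E[\hat{\beta}^j] \|$ is not a priori bounded away from zero, because $E[\hat{\beta}^j]$ is itself a design-weighted average of several true coefficients and could approximate $\tilde{\beta}^{*k}$ for some misaligned $k$. The remedy is to route the argument through the dominant overlap $k_0$: first establish $E[\hat{\beta}^j] \approx \tilde{\beta}^{*k_0}$, then apply the segment-separation assumption A2 to $\tilde{\beta}^{*k} - \tilde{\beta}^{*k_0}$ for the minor overlap $k$ supplied by the pigeonhole. A secondary difficulty is upgrading the pointwise Gaussian bound on each $E_i$ into a uniform bound across all discrete change-point configurations and covariate subsets, producing an additional logarithmic factor that is absorbed by the squared-form slack $(q_n \log n)^2 \prec n\epsilon_n$ in A3.
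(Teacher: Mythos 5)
Your proposal is correct in outline and reaches the same conclusion, but it routes the control of the stochastic terms differently from the paper. The paper's proof fixes $m_\beta\supseteq m_{\beta^*}$, splits each of $E_1,E_3$ into the piece on $P_j\cap P_j^*$ (where a dedicated Neumann-series argument, Lemma \ref{lem_bias}, shows the bias $E[\hat{\beta}^j]-\tilde{\beta}^{*j}$ is of order $\epsilon_n$) and the piece on $P_j-P_j^*$ (which carries only $O(n\epsilon_n)$ observations), and thereby obtains the explicit rates $\epsilon_n\sqrt{n}q_n^2\log n$ and $q_n\sqrt{n\epsilon_n}\log n$ that assumption (A3) is tailored to absorb; these granular bounds are then recycled in the uniform results (Theorem \ref{bfctr_vrsel}). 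You instead observe that, conditionally on the design, $E_1$ and $E_3$ are centered Gaussians whose variances are bounded by $\sigma^2$ times the bias quadratic form $B_n$ itself, so that $|E_1|+|E_3|=O_p(\sqrt{B_n})=o_p(B_n)$ once $B_n\to\infty$ (and $E_2$ is merely an $O_p(q_n)$ chi-square, not a centered linear form as you describe it). This self-normalized bound bypasses Lemma \ref{lem_bias} entirely and is cleaner for a single fixed alternative, which is all Theorem \ref{thm2} requires; your union bound over configurations and covariate subsets is not needed here, though it anticipates the later uniform statements where the paper's explicit rates do the work.

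Two points to tighten. First, your lower bound on $B_n$: the claim that $E[\hat{\beta}^j]$ lies in a vanishing neighborhood of $\tilde{\beta}^{*k_0}$ fails when two or more true segments overlap $P_j$ in comparable proportions, and with three or more overlaps the weighted average can even coincide with a minor segment's coefficient vector, so the particular pair $(k_0,k)$ you exhibit need not witness the bound. The robust fix is the weighted-variance inequality $\sum_{k}n_k\|\tilde{\beta}^{*k}-v\|^2\geq \frac{n_{k_1}n_{k_2}}{n_{k_1}+n_{k_2}}\|\tilde{\beta}^{*k_1}-\tilde{\beta}^{*k_2}\|^2$ applied to the two largest overlaps of distinct-coefficient segments, which (A6) converts into the design quadratic forms; this is the same level of care the paper itself applies at this step, so it is a repairable imprecision rather than a gap. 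Second, you should state explicitly, as the paper does, that the case $m_\beta\not\supseteq m_{\beta^*}$ reduces to Case (iii) of Theorem \ref{thm1}; in your framework it is subsumed because omitted true covariates only inflate $B_n$ to order $n$.
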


Note that {$q_n\log n\prec \sqrt{n\epsilon_n}$}. If  $q_n=O(\log n)$,  {$\epsilon_n\sim n^{-1}$} (up to logarithmic factors), satisfies this condition.
Under stronger conditions,  the Bayes factor consistency holds uniformly over covariate choice. The results hold for misspecification of variance parameter.
%

\begin{remark}
The condition of bounded covariates given in $(A1)$ can be relaxed. For example, if we use sub-exponential type tail bound conditions on the distributions of the covariates, that is, if $F_m(\cdot)$ is the CDF for $m^{th}$ covariate and $-\log (1-F_m(t))\succeq t^\alpha$ for some $\alpha>0$, then  Theorem \ref{thm2} holds under slightly modified (A3) (up to log factors). For two positive sequences $\{a_n\}_{n\geq 1},\{b_n\}_{n\geq 1}$, $a_n\preceq b_n$ if $a_n\leq Kb_n$ for some constant $K>0$.
\end{remark}

Instead of using known variance $\sigma^2$, if  we  use $\hat{\sigma}^2=n^{-1}\sum_{j}\|Y_j-\hat{Y}_j\|^2$, it can be shown that the conclusion of Theorems \ref{thm1} and \ref{thm2} hold.  Here $Y_j$ is the observation vector for $P_j$ and the $\hat{Y}_j$ is the least square fit for the proposed model $m_\beta$ based on the observations and covariates in  $P_j$. This result is addressed in the following Theorems.

%

\begin{theorem}
Let $0<t_1^*<\cdots<t_{l^*}^*<1$ be the locations of true change points, $M^*$ be the true model corresponding to true covariates and true change points, and $0<t_1<\cdots<t_l<1$ be the locations  of change points for the alternative with  $l \neq l^*$. Let $M_l$ be the corresponding model with change points at $t_1<\cdots<t_l$ and  some covariate combinations with at most $q_n$ many covariates. Then under $A1-A6$, for  \eqref{model2}, for the empirical estimator of $\sigma^2$,                      
\[BF(M_{l,m_\beta},M^*) \rightarrow 0 \text{ in probability as } n \rightarrow \infty .\] 
\label{thm1_emp}
\end{theorem}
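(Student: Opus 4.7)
The plan is to reduce Theorem~\ref{thm1_emp} to the known-variance version in Theorem~\ref{thm1} by a profile-likelihood substitution. Plugging $\hat\sigma^2_M = n^{-1}\sum_j \|Y_j - \hat Y_j\|^2$ into the log-marginal expression of Proposition~\ref{prop1} collapses the quadratic term $\tfrac{1}{2\sigma^2}\sum_j \|Y_j-\hat Y_j\|^2$ to the deterministic constant $n/2$, yielding the BIC-style profile expression
\[
\log L(\mathbf{Y}_n \mid M_{l,m_\beta}, \hat\sigma^2_M) = -\tfrac{n}{2}\log\hat\sigma^2_M -\tfrac12\sum_j \log\det(X_j'X_j) + c_l + O(1).
\]
First I would express $\log BF(M_{l,m_\beta}, M^*)$ as the sum of a log-variance ratio $-\tfrac{n}{2}\log(\hat\sigma^2_M/\hat\sigma^2_{M^*})$, a determinant difference $-\tfrac12\bigl[\sum_j\log\det(X_j'X_j)-\sum_{j^*}\log\det(X_{j^*}'X_{j^*})\bigr]$, and $c_l-c_{l^*}$, and verify that $\hat\sigma^2_{M^*}\to\sigma^2$ in probability under (A1), (A2), (A5), (A6).

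For the underfit case $l < l^*$, I would use pigeonhole: at least one block $P_j$ of the proposed partition contains a true changepoint, so the least-squares fit on $P_j$ cannot reconcile the two distinct true coefficient vectors on $P_j\cap P_k^*$ and $P_j\cap P_{k+1}^*$. Invoking the bias decomposition in~\eqref{decomp0}, the cross-partition bias $\sum_j\sum_{k\neq j}(\tilde\beta^{*k}-E[\hat\beta^j])'X_{j\cap k^*}'X_{j\cap k^*}(\tilde\beta^{*k}-E[\hat\beta^j])$ is of linear order $\Omega(n)$ under (A1), (A2), (A6), since at least one intersection $P_j\cap P_k^*$ has cardinality proportional to $n$ and carries a nonvanishing bias. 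This forces $\hat\sigma^2_M \geq \sigma^2 + c$ on an event of probability tending to $1$, so $-\tfrac{n}{2}\log(\hat\sigma^2_M/\hat\sigma^2_{M^*}) \leq -\tfrac{n}{2}\log(1+c/\sigma^2)$ is of order $-n$, swamping the determinant and $c_l$ contributions which are $O((l+l^*)q_n\log n) = o(n)$ by (A3).

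For the overfit case $l > l^*$, I would chain through a refinement model $M^R$ with $l$ changepoints that include all true $t_i^*$ and with covariate set $m_\beta^*$, writing $\log BF(M,M^*) = \log BF(M, M^R) + \log BF(M^R, M^*)$. The first summand reduces to an underfit-style bias argument, since $M$ and $M^R$ share the same number of changepoints but $M$'s partition-and-covariate pair cannot be a refinement of $M^*$'s unless $M = M^R$. For the second summand, $M^*$ is nested in $M^R$, so the RSS improvement $n(\hat\sigma^2_{M^*}-\hat\sigma^2_{M^R})/\sigma^2$ is a $\chi^2$-style quantity on $(l-l^*)\,\#\{m_\beta^*\}$ degrees of freedom under (A6) and hence $O_p((l-l^*)q_n)$; Taylor expansion then gives $-\tfrac{n}{2}\log(\hat\sigma^2_{M^R}/\hat\sigma^2_{M^*}) = O_p((l-l^*)q_n)$, while the determinant difference grows at rate $(l-l^*)q_n\log n$ by (A6), dominating by a $\log n$ factor under (A3).

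The main obstacle will be making this overfit step uniform --- specifically, upgrading the $O_p$ control on the refinement RSS and on $\hat\sigma^2_{M^*}$ to a deterministic bound on a high-probability event, uniformly across the exponentially large class of candidates $(\underline{t}_l, m_\beta)$ with $l$ changepoints and $\#\{m_\beta\}\leq q_n$. A peeling/union argument stratified by $(l, \#\{m_\beta\})$, together with the uniform singular value control in (A6) and the budget $(q_n\log n)^2 \prec n\epsilon_n$ from (A3), should absorb the extra logarithmic factors coming from the cardinality of the model class and close the proof.
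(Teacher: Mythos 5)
Your proposal follows essentially the same route as the paper: profile out $\sigma^2$ so the quadratic term collapses to $n/2$, write $\log BF$ as the determinant difference plus $-\tfrac{n}{2}\log\bigl(1+(\hat\sigma^2_{M_l}-\hat\sigma^2_{M^*})/\hat\sigma^2_{M^*}\bigr)+O(lq_n)$, and then split into the non-refinement case (where the bias from \eqref{decomp0} keeps $\hat\sigma^2_{M_l}$ bounded away from $\sigma^2$, giving an order-$n$ drop) and the refinement case (where the variance ratio vanishes, $\log(1+x)\sim x$, and the BIC-type determinant penalty of order $q_n\log n$ dominates the $O_p(q_n)$ RSS improvement, exactly as in Case (i) of Theorem \ref{thm1}). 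Your closing worry about uniformity over the model class is not needed here, since the theorem concerns a single fixed alternative $M_{l,m_\beta}$; that issue only arises for Theorems \ref{bfctr_vrsel} and \ref{vr_selection2}.
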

For the rate calculation, we have the following result. 
\begin{theorem}
Under $A1-A7$, for \eqref{model2} , for the empirical estimator of $\sigma^2$,  \[BF(M^{\epsilon_n}_{l^*}, M^*) \rightarrow 0 \text{ in probability as } n \rightarrow \infty .\] 
\label{thm2_emp}
\end{theorem}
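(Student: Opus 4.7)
The plan is to mirror the proof strategy of Theorem \ref{thm2} while carefully tracking how the substitution $\sigma^2 \mapsto \hat{\sigma}^2_M$ modifies the marginal log-likelihood in Proposition \ref{prop1}. Under model $M$, plugging $\hat{\sigma}^2_M = n^{-1}\sum_j \|Y_j - \hat{Y}_j\|^2$ profiles out the variance: the term $-\frac{1}{2\sigma^2}\sum_j \sum_{i\in P_j}(y_i - x_i\hat{\beta}^j)^2$ becomes $-\frac{n}{2}$, and the $-\frac{n}{2}\log\sigma^2$ from $c_{l,n}$ becomes $-\frac{n}{2}\log \hat{\sigma}^2_M$. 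So up to the model-complexity and determinant terms that Theorem \ref{thm2} has already controlled, the log Bayes factor for $M^{\epsilon_n}_{l^*,m_\beta}$ versus $M^*$ reduces to
\[
\log BF = -\tfrac{n}{2}\log\!\bigl(\hat{\sigma}^2_{M^{\epsilon_n}}/\hat{\sigma}^2_{M^*}\bigr) + R_n,
\]
where $R_n = O_p(q_n \log n)$ collects the penalty and cross-product remainders already bounded in the proofs of Theorems \ref{thm2} and \ref{thm1_emp}. It therefore suffices to show that $n\log(\hat{\sigma}^2_{M^{\epsilon_n}}/\hat{\sigma}^2_{M^*})$ is bounded below by a term of order $n\epsilon_n$ in probability, which by (A3) strictly dominates $R_n$.

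Next I would establish the two sides of this ratio. Under $M^*$, the least-squares fits within each true partition are unbiased and, using (A1), (A2), and the eigenvalue bounds in (A6), standard chi-square concentration gives $\hat{\sigma}^2_{M^*} = \sigma^2 + O_p(n^{-1/2})$. Under $M^{\epsilon_n}_{l^*,m_\beta}$, the decomposition \eqref{decomp0} expresses the residual sum of squares as the noise contribution $\sum_i (y_i - \theta_i^*)^2$ plus bias contributions from the misaligned overlaps $P_j \cap P_k^*$, $k \neq j$. Because $\max_i |t_i - t_i^*| > \epsilon_n$, at least one such overlap has cardinality of order $n\epsilon_n$; by (A2) the corresponding coefficient gap $\|\tilde{\beta}^{*k} - \tilde{\beta}^{*j}\|$ is bounded below, so after applying (A6) to the projection onto the chosen covariate set $m_\beta$, the bias contribution to $n\hat{\sigma}^2_{M^{\epsilon_n}}$ is at least $c\, n\epsilon_n$ for some $c > 0$. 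The cross terms $E_1, E_2, E_3$ can be bounded by $O_p\bigl(\sqrt{q_n\, n\epsilon_n\,\log n}\bigr)$ via sub-Gaussian concentration, which is $o(n\epsilon_n)$ by (A3). Combining, $\hat{\sigma}^2_{M^{\epsilon_n}} \ge \sigma^2 + c'\epsilon_n(1+o_p(1))$.

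Substituting into the displayed expression and using $\log(1+x) \ge x/2$ for small positive $x$,
\[
-\tfrac{n}{2}\log\!\bigl(\hat{\sigma}^2_{M^{\epsilon_n}}/\hat{\sigma}^2_{M^*}\bigr) \le -\tfrac{c' n\epsilon_n}{4\sigma^2}(1+o_p(1)),
\]
so $\log BF \to -\infty$ in probability provided $n\epsilon_n \gg q_n\log n$, which is precisely (A3) together with the implicit tightening assumption (A7) referenced in the statement. The main obstacle is the uniform handling of the plug-in variance: because $\hat{\sigma}^2_{M^{\epsilon_n}}$ must be controlled uniformly over all candidate change-point vectors $(t_1,\ldots,t_{l^*})$ and all covariate combinations $m_\beta$ of size $\le q_n$, the cross-product bounds in \eqref{decomp0} require a uniform concentration step, which I would obtain by combining a union bound over $\binom{n}{l^*}\binom{p}{q_n}$ candidate models with the sub-Gaussian tail inequality on quadratic forms from assumption (A6); the condition $q_n^2 \preceq n^c$ with $c<0.5$ in (A3) is exactly what makes this union bound payload manageable. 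A secondary subtlety is ruling out anomalously small $\hat{\sigma}^2_M$ that would destabilize the logarithm, which follows because $\hat{\sigma}^2_{M^{\epsilon_n}} \ge \hat{\sigma}^2_{M^*}(1 + o_p(1)) \ge \sigma^2/2$ on a high-probability event.
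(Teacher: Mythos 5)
Your proposal is correct and follows essentially the same route as the paper: profile out the variance so that the log Bayes factor reduces to $-\tfrac{n}{2}\log\bigl(1+(\hat{\sigma}^2_{M^{\epsilon_n}}-\hat{\sigma}^2_{M^*})/\hat{\sigma}^2_{M^*}\bigr)$ plus determinant and $O(l q_n)$ remainders, show $\hat{\sigma}^2_{M^*}\to\sigma^2$ while the variance gap is nonnegative and of order $\epsilon_n$ via the bias/cross-term decomposition already used for Theorem \ref{thm2}, and conclude with $\log(1+x)\sim x$. The only difference is that you add a union bound over all change-point and covariate configurations, which is not needed here since the theorem is a pointwise statement for a fixed alternative (uniformity is deferred to Theorems \ref{bfctr_vrsel} and \ref{vr_selection2} under the extra conditions B1--B2).
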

\begin{remark}
Theorem \ref{thm1_emp} and Theorem \ref{thm2_emp} hold under misspecification, that is  if the true variance parameter is not same over  different partitions $P^*_{j}$'s. 
\label{diff_var}
\end{remark}

Next, we address the variable selection issue. We have already shown that under any covariate combination, the model with incorrectly selected  change points has Bayes factor converging to zero with respect to the model with true change points and covariate combination. Showing variable selection consistency with the change points set to true change points then boils down to showing the consistency of variable selection. Let $\Pi(M_{l^*,m_\beta}|\cdot)$ be the posterior probability of a model with covariate combinations given $m_\beta$ and change points at true change point locations $t_1^*,\cdots,t_{l^*}^*$, and let $P_n(M_{l^*,m_\beta}:M^*)$ be the ratio of the posterior probabilities  $\Pi(M_{l^*,m_\beta}|\cdot)$ and $\Pi(M^*|\cdot)$, based on $n$ observations. Similarly, we can define $P_n(M_{l,m_\beta}:M^*)$.  We have the following result regarding variable selection.

\begin{theorem}
Under $A1-A7$,  for \eqref{model2}, $ BF(M_{l^*,m_\beta},M^*)\rightarrow 0$ in probability for $m_\beta\neq m_{\beta^*}$.
\label{vr_selection1}
\end{theorem}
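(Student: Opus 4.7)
The plan is to leverage Proposition~\ref{prop1} with the change points frozen at the true locations $t_1^*,\ldots,t_{l^*}^*$ in both the numerator and denominator of the Bayes factor. Under this common partition, the $-\tfrac{n}{2}\log\sigma^2$ piece of $c_{l,n}$ cancels, leaving only the $c_l = O(kl^*)$ contribution that is linear in the model size $k = \#m_\beta$. Moreover, since $P_j = P_j\cap P_j^*$ for every $j$, the cross-partition residual terms $E_1,E_2,E_3$ in \eqref{decomp0} vanish identically. Consequently $\log BF(M_{l^*,m_\beta},M^*)$ reduces to a sum over partitions of (i) a log-determinant penalty $-\tfrac{1}{2}\log|X_j'X_j|$, (ii) a bias term proportional to $({\tilde\beta}^{*j}-E[\hat\beta^j])'X_j'X_j({\tilde\beta}^{*j}-E[\hat\beta^j])$, and (iii) a stochastic quadratic form in $\hat\beta^j-E[\hat\beta^j]$, each compared against its analogue under $M^*$.

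The argument then splits on whether $m_\beta\supsetneq m_{\beta^*}$ or $m_\beta\not\supseteq m_{\beta^*}$. In the \emph{underfitting/non-nested} case, some truly nonzero coordinate of $\beta^{*j}$ is forced to zero in the orthogonal projection $E[\hat\beta^j]$ onto the column space of $X_j$ restricted to $m_\beta$. Assumption~(A2) lower-bounds the omitted true coefficients away from zero, and (A6) ensures the relevant restricted Gram matrix has smallest singular value growing linearly with $n_j=\abs{P_j}$. Hence the bias term is of order $n_j$ and, summed over $j$, is $\Omega(n)$, dwarfing the $O(q_n\log n)$ log-determinant and model-size penalties permitted by~(A3), so $\log BF\to -\infty$. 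In the \emph{overfitting} case, the bias term vanishes because $E[\hat\beta^j]={\tilde\beta}^{*j}$, and we instead exploit the BIC-type gap: by~(A6), $\log|X_j'X_j|$ increases by roughly $(k-k^*)\log n_j$ relative to the true model, while the difference of the remaining stochastic quadratic forms is $O_p(k-k^*)$ via standard chi-squared concentration. Since $k>k^*$ and $\log n\to\infty$, the log-determinant penalty dominates and again drives $\log BF\to -\infty$.

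The main obstacle is controlling the overfitting case \emph{uniformly} over the potentially exponentially many candidate covariate sets $m_\beta$ with $\abs{m_\beta}\leq q_n$. A pointwise $O_p(k)$ bound on the chi-squared fluctuation is insufficient; one must invoke sub-exponential tail inequalities for $\chi^2$ variables together with a union bound over all $\binom{p}{q_n}$ candidates, and this is precisely where the growth restriction $q_n^2\preceq n^c$ with $c<1/2$ in~(A3), yielding $q_n\log n = o(\sqrt{n\epsilon_n})$, becomes indispensable. A subsidiary technical point is verifying that (A6) can be applied uniformly to every sub-matrix indexed by $m_\beta\cup m_{\beta^*}$, and, in order to align with Theorems~\ref{thm1_emp}--\ref{thm2_emp}, that replacing the known $\sigma^2$ by the empirical $\hat\sigma^2$ computed under any candidate $m_\beta$ perturbs the log Bayes factor only by lower-order terms; this follows because the residual sum of squares under any model with at most $q_n$ covariates differs from $n\sigma^2$ by $O_p(\sqrt{n})$, uniformly in the candidate set.
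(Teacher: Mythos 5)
Your proposal is correct and follows essentially the same route as the paper, which derives Theorem~\ref{vr_selection1} from Part~1 of the proof of Theorem~\ref{vr_selection2}: with the partition frozen at the true change points, the overfitting ($m_\beta\supset m_{\beta^*}$) case is settled by the BIC-type $\tfrac12(q-t)\log n$ determinant penalty beating an $O_p(q-t)$ chi-squared gain, and the non-nested case by an $\Omega(n)$ bias term guaranteed by (A2) and the eigenvalue condition (A6). One small imprecision: the terms $E_1,E_2,E_3$ of \eqref{decomp0} do not ``vanish identically'' when the partitions coincide --- $E_2$ is exactly the $\chi^2_q$ fluctuation term, and in the non-nested case $E_1,E_3$ survive as noise--bias cross products of order $o_p(n)$ --- but since you account for the stochastic quadratic form separately and the surviving cross terms are dominated by the $\Omega(n)$ bias, the conclusion is unaffected.
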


For showing consistency over all possible variable choices we assume the following.
\begin{itemize}
\item[(B1)]$-\log P(I_m^\beta=1)\sim (\log n)^{1+\alpha_1}$, for any $\alpha_1>0$.
\item[(B2)] $\sqrt{n} \succ q_n^{2.5}\log n$ and $\sqrt{n\epsilon_n} \succ q_n^{1.5}\log n$.
\end{itemize}

\begin{remark}
Condition $B1$ imposes a stronger penalty on the larger model, which induces selection consistency uniformly over covariate choice and model size. Similarly $B2$ is needed to bound the Bayes factors uniformly over covariate choices.

\end{remark}

We consider the models $M^{\epsilon_n}_{l^*,m_\beta}$ defined as earlier. Then we have the following. 
\begin{theorem}
Under the assumptions $A1-A7$ and $B1, B2$, for \eqref{model2}, $\sup \sum_{m_\beta \neq m_\beta^*}P_n(M^{\epsilon_n}_{l^*,m_\beta} : M^*)\rightarrow 0$ in probability, where the supremum is over the possible  change point selections  and $\epsilon_n$ converges to zero.
\label{bfctr_vrsel}
\end{theorem}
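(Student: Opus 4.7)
The plan is to bound $\sum_{m_\beta \neq m_{\beta^*}} P_n(M^{\epsilon_n}_{l^*, m_\beta} : M^*)$ uniformly in the change point selection by splitting the covariate models into two classes and using the exponential prior penalty from $B1$ to absorb the combinatorial explosion of model count. Write each summand as $\text{BF}(M^{\epsilon_n}_{l^*,m_\beta}, M^*)\,\pi(M^{\epsilon_n}_{l^*,m_\beta})/\pi(M^*)$. By $B1$, the inclusion prior odds satisfy $P(I_m^\beta=1)/P(I_m^\beta=0) \asymp \exp\{-(\log n)^{1+\alpha_1}\}$, so each extra (resp.\ missing) covariate relative to $m_{\beta^*}$ contributes a super-polynomial penalty (resp.\ a bounded factor, since $k^* \le q_n$). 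Decompose the sum into class (a) overfitted models $m_\beta \supsetneq m_{\beta^*}$ and class (b) underfitted/misspecified models $m_\beta \not\supset m_{\beta^*}$.

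For class (a), the likelihood-ratio expansion used in Theorem \ref{thm1_emp} applied to the proposed (possibly mis-aligned) partition yields $\log \text{BF} \le -\tfrac{1}{2}(k-k^*)\log n + O(1)$ plus the change-point penalty already controlled by Theorem \ref{thm2_emp}. Combined with the prior ratio, each overfitted model contributes at most $n^{-(k-k^*)/2}\exp\{-(k-k^*)(\log n)^{1+\alpha_1}\}$. Summing over the $\binom{p-k^*}{j}$ choices of excess size $j = k-k^* \ge 1$, and using $\log p = O(\log n)$ implicit in the sparse high-dimensional regime, yields a geometric series dominated by $\sum_{j\ge1} \exp\{j[\log p + \tfrac{1}{2}\log n - (\log n)^{1+\alpha_1}]\}$, which tends to zero.

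For class (b), missing at least one truly active covariate forces a non-vanishing per-observation bias in every fitted segment. Assumption $A2$ (coefficients bounded away from zero) combined with $A1$ and the well-conditioning from $A6$ implies the residual sum of squares of $M^{\epsilon_n}_{l^*,m_\beta}$ exceeds that of $M^*$ by at least $c n$ with probability tending to one, uniformly in the change point configuration, so $\text{BF} \le \exp\{-c' n\}$ for some $c'>0$. The total number of such models of size $\le q_n$ is at most $(q_n+1)p^{q_n} = \exp\{O(q_n \log n)\}$, which is negligible compared with $\exp\{-c'n\}$ under $B2$ ($q_n^{2.5}\log n \prec \sqrt{n}$ gives in particular $q_n \log n = o(n)$). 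Uniformity over change point selections follows by a union bound over the at most $\binom{n}{l^*} \le n^{l^*}$ configurations (with fixed $l^*$), contributing only a polynomial factor absorbed by the super-polynomial / exponential tail bounds above.

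The main obstacle is obtaining the bias lower bound $cn$ in class (b) uniformly over both $m_\beta$ and the change point configuration $\underline{t}_{l^*}$: when the proposed change points are mis-aligned, the within-segment design and response vectors mix contributions from several true segments, and the omitted-covariate bias could in principle partially cancel the change-point bias through the cross terms $E_1, E_2, E_3$ in \eqref{decomp0}. The resolution is to use $A6$ to show that the dominant quadratic form $(\tilde{\beta}^{*j}-E[\hat{\beta}^j])'X_{j\cap j^*}'X_{j\cap j^*}(\tilde{\beta}^{*j}-E[\hat{\beta}^j])$, aggregated across all segments in which a truly-active coefficient has been dropped, is $\Omega(n)$ — this requires tracking, via the singular-value bounds $an_i < \lambda_A < bn_i$ in $A6$, that the minimum eigenvalue on each aggregate segment remains of the proportional order of its sample count, and that the bias component along the dropped-covariate directions cannot be zeroed out by the flexibility in choosing $E[\hat\beta^j]$.
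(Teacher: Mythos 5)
Your overall skeleton matches the paper's proof: the split into overfitted models $m_\beta\supsetneq m_{\beta^*}$ versus models missing an active covariate, the use of the super-polynomial prior penalty from $B1$ to absorb the $\binom{p}{j}$ model count, the $\Omega(n)$ residual-sum-of-squares gap (via $A2$, $A6$) for the second class, and the closing union bound over the $n^{l^*}$ change-point configurations are all present in the paper's argument.

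The gap is in your class (a). You bound each overfitted model by ``$\log \mathrm{BF} \le -\tfrac12(k-k^*)\log n + O(1)$ plus the change-point penalty already controlled by Theorem \ref{thm2_emp},'' but Theorem \ref{thm2_emp} controls that penalty for a \emph{single} model, and the $O(1)$ is likewise a per-model statement. For a partition misaligned by $\epsilon_n$, the log-likelihood ratio contains the cross-product terms $E_1,E_2,E_3$ of \eqref{decomp0}, whose magnitude for a single model of size $q$ is already of order $\sqrt{n}\,\epsilon_n q^2\log n$ and $\sqrt{n\epsilon_n}\,q\log n$ --- far larger than $(k-k^*)\log n$, and of either sign --- so the BIC-type penalty cannot carry class (a); what must dominate is the $n\epsilon_n$ quadratic bias from the misalignment. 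To make that domination hold \emph{simultaneously} over all $\sum_q\binom{p}{q}$ covariate choices and all change-point configurations, the paper proves the uniform deviation bound $P\bigl(|n_s^{-1/2}\sum_{i\in\mathscr{S}}e_ix_{ij}|\ge\sqrt{q}\log n\bigr)\le e^{-cq(\log n)^2}$ (and its analogue for each coordinate of $\sqrt{n}(\hat\beta^j-E[\hat\beta^j])$), pays an extra $\sqrt{q_n}$ in each error bound relative to the fixed-model case, and obtains $|E_1|,|E_2|,|E_3|\preceq\max\{\sqrt{n}\,\epsilon_nq_n^{2.5}\log n,\ \sqrt{n\epsilon_n}\,q_n^{1.5}\log n\}$ outside a set of probability $n^{l^*}\sum_qd_n^qe^{-cq(\log n)^2}\to0$. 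That inflation is exactly why $B2$ carries the exponents $2.5$ and $1.5$. Your proposal invokes $B2$ only in class (b), where it is not actually needed ($q_n\log n=o(n)$ already follows from $A3$), and never deploys it where it is essential; you correctly identify the possible cancellation between omitted-covariate bias and change-point bias through $E_1,E_2,E_3$ as the ``main obstacle,'' but assign it to the wrong class and do not supply the uniform large-deviation and union-bound step that resolves it.
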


For any model with change points $t_1<\cdots<t_l$ we can state the following result for variable selection. 

\begin{theorem}
Let $M_{l,m_\beta}$ be the corresponding model with change points at $t_1<\cdots<t_l$ and some covariate combinations $m_{\beta}$. Under $A1-A7, B1,B2$,  for \eqref{model2}, $\sum_{m_\beta:m_\beta\neq m_{\beta^*}}P_n(M_{l,m_\beta}:M^*) \to 0$ in probability. \label{vr_selection2}
\end{theorem}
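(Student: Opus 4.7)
The plan is to decompose the Bayes factor multiplicatively through an intermediate model that shares the proposed change points $t_1<\dots<t_l$ with $M_{l,m_\beta}$ but uses the \emph{true} covariate set $m_{\beta^*}$. Call this intermediate model $M_{l,m_{\beta^*}}$. Writing
\[
P_n(M_{l,m_\beta}:M^*)=P_n(M_{l,m_\beta}:M_{l,m_{\beta^*}})\,P_n(M_{l,m_{\beta^*}}:M^*),
\]
the second factor does not depend on $m_\beta$ and can be pulled out of the sum, yielding
\[
\sum_{m_\beta\neq m_{\beta^*}}P_n(M_{l,m_\beta}:M^*) \;=\; P_n(M_{l,m_{\beta^*}}:M^*)\cdot \sum_{m_\beta\neq m_{\beta^*}} P_n(M_{l,m_\beta}:M_{l,m_{\beta^*}}).
\]
This cleanly separates the variable-selection error (inner sum, with change points held fixed) from the change-point error (outer factor, with covariates set to $m_{\beta^*}$).

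I would then split into two cases according to the proposed partition. \textbf{Case 1:} $l=l^*$ and $\max_i |t_i-t_i^*|\le \epsilon_n$. Here the outer factor is $O_p(1)$, and the inner sum tends to zero in probability exactly by Theorem \ref{bfctr_vrsel} applied with the partition $(t_1,\dots,t_{l^*})$, whose hypotheses are met under $A1$--$A7$, $B1$, $B2$. \textbf{Case 2:} either $l\neq l^*$, or $l=l^*$ but some $|t_i-t_i^*|>\epsilon_n$. Here Theorems \ref{thm1_emp}--\ref{thm2_emp} (or \ref{thm1}--\ref{thm2} under known variance) show that the outer factor $P_n(M_{l,m_{\beta^*}}:M^*)$ decays in probability at an (essentially) exponential rate governed by the change-point signal strength. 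For the inner sum, Proposition \ref{prop1} expresses $\log L({\bf Y}_n\mid M_{l,m_\beta})$ as a BIC-type functional: a residual sum of squares plus a $\tfrac12\log\det(X_j'X_j)$ penalty per partition. Under assumption $A6$, each such penalty contributes on the order of $q_n\log n$ per partition, and assumption $B1$ adds a Bernoulli prior penalty of order $(\log n)^{1+\alpha_1}$ per included covariate. Together with the cardinality bound on the set of $m_\beta$ with at most $q_n$ covariates, these penalties are enough to make the inner sum at most polynomial in $n$ uniformly over $(t_1,\dots,t_l)$, which is dominated by the outer factor's decay.

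The main obstacle lies in Case 2: bounding $\sum_{m_\beta\neq m_{\beta^*}}P_n(M_{l,m_\beta}:M_{l,m_{\beta^*}})$ uniformly when the proposed partition is not a refinement of the true partition. In that regime the decomposition (\ref{decomp0}) contains bias terms of the form $(\tilde\beta^{*k}-E[\hat\beta^j])'X_{j\cap k^*}'X_{j\cap k^*}(\tilde\beta^{*k}-E[\hat\beta^j])$ and cross-product terms $E_1,E_2,E_3$ that interact with the misspecified covariate set $m_\beta$. The boundedness of covariates ($A1$) and coefficients ($A2$) keeps the bias terms of controlled size, while the Gaussian tails of the noise combined with the eigenvalue bounds of $A6$ give sub-Gaussian control of the cross terms uniformly over the model space. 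Assumption $B2$ ($\sqrt{n\epsilon_n}\succ q_n^{1.5}\log n$ and $\sqrt{n}\succ q_n^{2.5}\log n$) is precisely what is required so that the uniform error bound on the inner sum is dominated both by the change-point signal in the outer factor and by the prior penalty in $B1$.

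Finally, a union bound over $l$ (with $l\le n$) and over the discrete grid of possible $(t_1,\dots,t_l)\in\{1,\dots,n\}^l$ closes the argument. The worst-case $m_\beta$-cardinality factor $\sum_{k\le q_n}\binom{p}{k}\le \exp(q_n\log p)$ is absorbed by the exponent of the Bayes-factor bound in both cases thanks to $B1$. Combining Cases 1 and 2 yields $\sum_{m_\beta\neq m_{\beta^*}} P_n(M_{l,m_\beta}:M^*)\to 0$ in probability, as required.
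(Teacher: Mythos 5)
Your high-level architecture (separate the covariate-selection error from the change-point error, then union-bound over covariate sets using the prior penalty in $B1$) matches the spirit of the paper's three-part proof, and your Case 1 is essentially Part 1 of the paper's argument combined with Theorem \ref{bfctr_vrsel}. The problem is Case 2, where your factorization through the intermediate model $M_{l,m_{\beta^*}}$ creates a genuine gap. The claim that $\sum_{m_\beta\neq m_{\beta^*}}P_n(M_{l,m_\beta}:M_{l,m_{\beta^*}})$ is ``at most polynomial in $n$'' uniformly over misspecified partitions is not justified and is false in general: when a cell $P_j$ straddles a true change point, the mean $E[{\bf Y}_n^j]$ does not lie in the column space of the $m_{\beta^*}$-design restricted to $P_j$, so the fit under $M_{l,m_{\beta^*}}$ carries a deterministic bias of squared norm up to order $n$. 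Enlarging the covariate set to some $m_\beta\supset m_{\beta^*}$ can absorb a nonvanishing fraction of that bias --- nothing in $A1$, $A2$ or $A6$ prevents an extra column from being well correlated with the ``segment-indicator times active covariate'' direction --- so the difference of residual sums of squares between $m_{\beta^*}$ and $m_\beta$ on $P_j$ can be of order $n$, and individual terms of your inner sum can be as large as $e^{cn}$. Since your outer factor decays only like $e^{-c'n}$, the product is not controlled without a quantitative comparison of $c$ and $c'$, which your argument does not supply.

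The paper avoids this by never comparing two partition-misspecified models to each other: every $M_{l,m_\beta}$ is compared directly to $M^*$ (or to a refinement model $M^*_{l_{\tilde t}}$) through the decomposition \eqref{decomp0}, in which the bias quadratic forms $({\tilde{\beta}}^{*k}-E[\hat{\beta}^j])'X'_{j\cap k^*}X_{j\cap k^*}({\tilde{\beta}}^{*k}-E[\hat{\beta}^j])$ are computed on the union covariate set $m_\beta\cup m_{\beta^*}$. Because $E[\hat{\beta}^j]$ is a single vector while adjacent true coefficient vectors differ by a fixed amount ($A2$), at least one of these forms is bounded below by a constant times $n$ (or $n\epsilon_n$) via the eigenvalue condition $A6$, uniformly over all $m_\beta$ with at most $q_n$ covariates; the ability of spurious covariates to reduce the residual sum of squares is already built into $E[\hat{\beta}^j]$ and into the chi-square and cross-product terms $E_1,E_2,E_3$, which are then union-bounded over covariate and change-point choices exactly as in the proof of Theorem \ref{bfctr_vrsel}, before summing the resulting uniform bound $\log L({\bf Y}_n|M_{l,m_\beta})-\log L({\bf Y}_n|M^*)\preceq -n$ against the prior penalty from $B1$. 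To repair your proof you would either need to reproduce that uniform lower bound for the direct comparison (at which point the intermediate model buys nothing), or prove a uniform upper bound on the bias absorbed by spurious covariates, which would require an assumption beyond $A1$--$A7$, $B1$, $B2$.
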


\begin{remark}
Theorems \ref{bfctr_vrsel} and Theorem \ref{vr_selection2} hold under unknown $\sigma^2$, for the empirical estimator $\sigma^2$ as in Theorem \ref{thm1_emp} and \ref{thm2_emp}, and under misspecification of equal variance as in Remark \ref{diff_var}. 
\label{diff_var2}
\end{remark}

\begin{remark}
{\bf Covariate free cases.}
One special case of the model given in \eqref{model1} is the covariate free cases, which is the simple mean model. For such model the result given in Theorem  \ref{thm2}  regarding Bayes factor consistency holds for $\epsilon_n$ going to zero for $\epsilon_n \succeq n^{-1} (\log n)^2$, which  gives  us rate equivalent to frequentist minimax rate up to logarithmic factors. 
\end{remark}

\section{Simulation}\label{sec:simulation}

In this section, we demonstrate the performance of the Bayesian hierarchical model described in \eqref{model1} and \eqref{model2}, for changing linear model and the simpler special case of piecewise constant mean model, respectively. First, we show the recovery of true mean as well as the true change-point locations the simple changing mean model, and then we consider a case with number of covariates $d_n \sim n$ case where we show that we can achieve accuracy in both variable selection and change point estimation under the Bayesian model. Our goal here is not to establish superiority of the Bayesian method used here over extant methods, but rather to show that the methods are not just theoretically optimal, they also have satisfactory small sample performance. 

\subsection{Example 1}\label{sim:1}
We consider an example originally reported in \citep{frick2014multiscale} and compared against an empirical Bayes procedure in \citep{martin2017asymptotically}, with $6$ change points for piecewise constant Gaussian sequence model. Here the data-generating model is given as follows:

\beq 
y_i=\theta_i+ \epsilon_i, \quad \epsilon_i \iid \NormRV(0,0.04), \; i = 1, \ldots, n \; ( = 497),
\eeq
with the true mean being: 
\[
\theta_i = \begin{cases}-0.18; \hspace{0.02in}& 1\leq i \leq 138\\
0.08; \hspace{0.02in} &139\leq i \leq 225\\
1.07; \hspace{0.02in} &226 \leq i \leq 242\\
-0.53; \hspace{0.02in}& 243\leq i\leq 299\\
0.16; \hspace{0.02in} &300\leq i\leq 308\\
-0.69; \hspace{0.02in}& 309\leq i\leq 333\\
-0.16; \hspace{0.02in} &334\leq i\leq 497.
\end{cases}
\]

The sequence of true means $\theta_i$'s are depicted in Fig. \ref{frick1}. 

\begin{figure}[ht!]
\centering
\includegraphics[height=2.3in,width=3.5in]{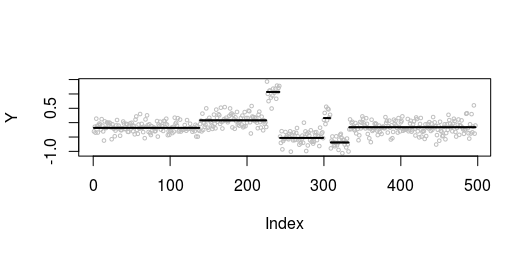}
\caption{True mean parameter values used in \citep{frick2014multiscale}}
\label{frick1}
\end{figure}

We compare the recovery and estimation performance of the method proposed here with two candidates: the first is a frequentist method (the pruned exact linear time method or PELT, \citep{killick2012optimal}) and the second is an empirical Bayes approach from \citep{martin2017asymptotically} (EB). We describe these two comparative candidates briefly. 

For PELT, consider ordered data-points: $y_1, y_2, \ldots, y_n$, and $m$ change-points $\tau_1, \ldots,\tau_m$ that divide the data into $m+1$ partitions. The change-point detection methods then seek to minimize a function:
\[
\sum_{i=1}^{m} C(y_{\tau_{i-1}+1:\tau_i})+ \text{pen}(n) f(m),
\]
where $C$ is a cost-function and $\text{pen}(n) \times f(m)$ is the penalty applied to prevent over-fitting. 
For observations $y_1, \ldots, y_n \sim f(y \mid \theta)$ for some unknown underlying parameter $\theta$. The PELT method uses the negative log-likelihood as the cost function:$ C(y_{(t+1):s}) = - \max_{\theta} \sum_{i=t+1}^{s} f(y_i \mid \theta)$. The penalty is chosen based on the inferential goal, e.g. $\text{pen}(n) = n \log(n)$ is the popular BIC penalty and $f(m) = m$ assumes that penalization is linear with the number of change-points. When $m$ is not too large, BIC favors a parsimonious model and can be shown to be model selection consistent. 

The empirical Bayes (EB) approach in \citep{martin2017asymptotically} works via specification of priors on block-specific parameter vectors ($\theta_B$) and block configurations, where the prior centers on mean parameters are data-dependent. In particular, the mean parameter in each block is assumed to be Gaussian centered on maximum likelihood estimates based on observations in that block, and the block-configuration follows a discrete uniformly distributed partition points, with the configuration size or number of blocks following a truncated geometric distribution. 

For the results shown below, we do not assume known $\sigma^2$ and assume that they can be different over partitions. We use  $V=1$, $I_i \sim \text{Bernoulli}(1/n)$, and calculate the posterior summaries based on $8,000$ Markov chain monte carlo samples with first $4,000$ burn-ins. The fitted mean and the posterior probabilities for partitions are given in Fig. \ref{fitex1}. 

\begin{figure}[ht!]
\centering
\begin{subfigure}{0.7\linewidth}
\includegraphics[width = \textwidth]{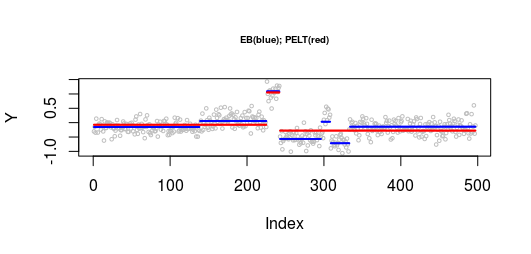}
\end{subfigure}
\vspace{-0.15in}
\begin{subfigure}{0.31\linewidth}
\includegraphics[width = \textwidth]{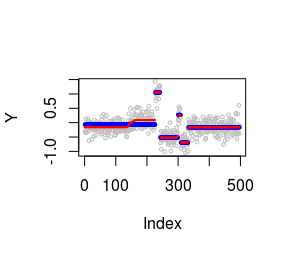}
\end{subfigure}
\begin{subfigure}{0.32\linewidth}
\includegraphics[width = \textwidth]{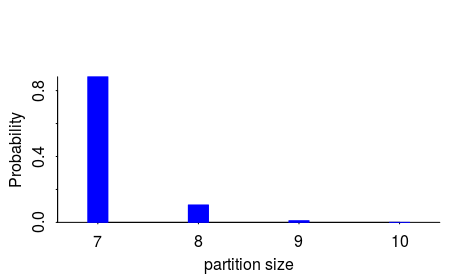}
\end{subfigure}
\begin{subfigure}{0.32\linewidth}
\includegraphics[width = \textwidth]{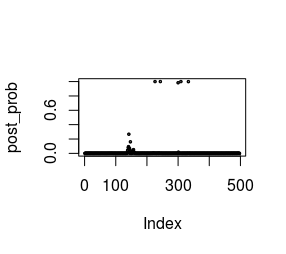}
\end{subfigure}
\caption{{Top row: change-point recovery by EB/PELT; Bottom row: Lower left: posterior mean(red) and mean based on estimated change-points (blue); middle: posterior probability of the number of partitions; Lower right: posterior probability for change points at $i=1,\dots,n$.}}
\label{fitex1}
\end{figure}

%
%
%
%
%


\subsection{Example 2: a case with covariates}\label{sim:2}

Consider a changing linear regression problem where the underlying linear model changes between different observation windows or epochs. Here the parameters of interest are both the parameter vector $\bbeta$ as well as the number and location of change points. Let us fix the dimensions of observations and covariates to be $n=250, p=250$. Suppose the true locations of change-points as a fraction of the total number of observations are given by : $t_1^*=0.3, t_2^*=0.7$. Finally, let the covariates for the $i^{th}$ observation $x_{i,1}\ldots,x_{i,p}$ ($i = 1, \ldots, n$), are generated from independent standard normal distribution. The data-generating model used for this experiment is given as follows: 

\[
y_i = \begin{cases}
                       3+x_{i,2}+2x_{i,12}+1.2e_i & \text{ if } i\leq 75 \\
											 1+2x_{i,2}+.8e_i& \text{ if } 75<i\leq 175\\
											 -2.5+2x_{i,2}-x_{i,3}+e_i  & \text{ if } 175<i\leq 250.
											\end{cases}, i = 1, \ldots, 250. 
\]
Here, $e_i \iid \NormRV(0,1)$ and the true change points occur in positions $i = 75$ and $i = 175$ as mentioned before, and {the proportion of change-points in observations is $p_n=\frac{1}{n}$}. 

\begin{figure}[ht!]
\centering
\begin{subfigure}{0.45\linewidth}
\includegraphics[width=\textwidth]{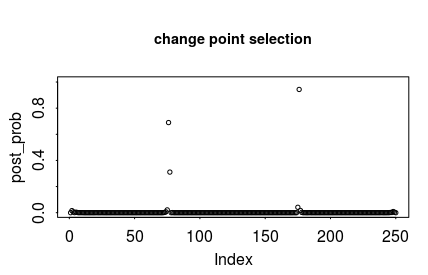}
\caption{Change point recovery}
\label{fig:3a}
\end{subfigure}
\begin{subfigure}{0.45\linewidth}
\includegraphics[width=\textwidth]{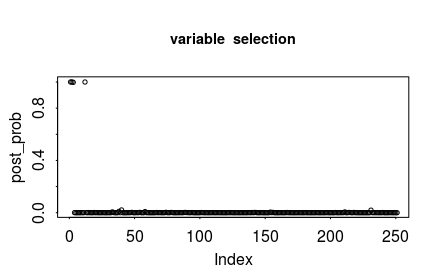}
\caption{Globally active variable selection performance }
\label{fig:3b}
\end{subfigure}
\begin{subfigure}{0.5\linewidth}
\includegraphics[width=\textwidth]{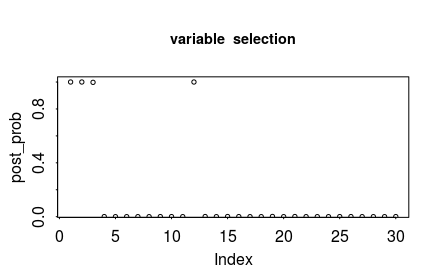}
\caption{Globally active variable selection performance}
\label{fig:3c}
\end{subfigure}
\caption{Changing linear regression example showing performance of the hierarchical Bayesian model}
\label{fig:reg1}
\end{figure}

We use a spike-slab prior on the regression coefficients to detect the global set of covariates and the change points. We let the Markov chain Monte Carlo chain run for $8,000$ iterations and calculate the posterior modes and means for the two indicator variables for the change-points and the non-null $\beta_j$s. The posterior mean estimates are plotted in Fig. \ref{fig:reg1}. Figure \ref{fig:3a} shows that the both the two change-points at $i/n = 0.3$ and $0.7$ can be recovered with high posterior probability. Figures \ref{fig:3b} and \ref{fig:3c} show that the global set of active $\beta_j$'s can also be recovered with high probability.  In particular, from Fig. \ref{fig:reg2},  almost 93\% of the posterior  samples give the correct number of partitions (3), and almost 90\% of the posterior samples select the right model. Thus, the numerical results are in concurrence with our theoretical proofs of consistency in model selection and change-point detection for a changing linear regression in \S \ref{sec:theory}.

\begin{figure}[ht!]
\centering
\begin{subfigure}{0.45\linewidth}
\includegraphics[width=\textwidth]{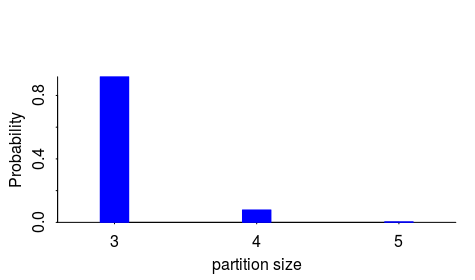}
\caption{Posterior distribution for number of change-points}
\end{subfigure}
\begin{subfigure}{0.45\linewidth}
\includegraphics[width=\textwidth]{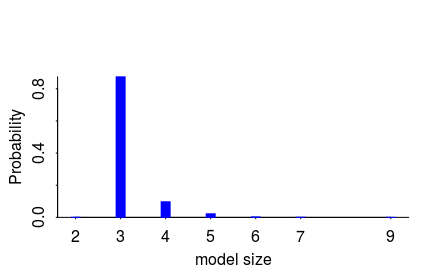}
\caption{Posterior distribution for number of active covariates}
\end{subfigure}
\caption{Regression example: posterior distribution of model size and partition size.}
\label{fig:reg2}
\end{figure}

\subsection{Example 3: a case with covariates and time dependent component}

This example considers a change in linear structure when an autoregressive time dependent component is present. Such scenario may arise in economic application, when for example housing price index may change with the covariates stock market return, but an autoregressive structure may be present for the response variable, i.e. price.  We use a similar model as before with $n=300$, $p=250$ but add an autoregressive component of first order (AR(1)) with autocorrelation equal to $\rho$. As before, the true change-points occur at positions $i = 90$ and $i = 210$, at relative positions $t^* = 0.3$ and $t^* = 0.7$ as before. We generate data from the following model: 
 \[
y_i = \begin{cases}
                       3+\rho y_{i-1}+x_{i,2}+3x_{i,12}+1.2e_i & \text{ if } i\leq 90 \\
											 1+\rho y_{i-1}+2x_{i,2}+.8e_i& \text{ if } 90<i\leq 210\\
											 -2+\rho y_{i-1}+x_{i,2}-x_{i,3}+e_i  & \text{ if } 210<i\leq 300,
											\end{cases}
\]
with $\rho=0.5$, $x_{i,j}$, $e_i$'s are i.i.d. $\NormRV(0,1)$. It should be noted that our theoretical results from \S \ref{sec:theory} will continue to hold for this situation, as long as condition \textbf{(A6)} holds for the new design matrix, accounting for the autoregressive structure. 

We use the spike and slab prior on the coefficients and the parameter $\rho$, and use a computational scheme similar to the example in \S \ref{sim:2}. The posterior probabilities for the variable selection and change-point detection are given in Fig. \ref{fig:ar1}, where it can be seen that the true change point locations (\textit{vide} Fig. \ref{fig:ar1a}) and the active variables (\textit{vide} Fig. \ref{fig:ar1b} and Fig. \ref{fig:ar1c}) are selected with high probability. We also note that the estimated posterior inclusion probability for the AR(1) component is $1$.  

\begin{figure}[ht!]
\centering
\begin{subfigure}{0.45\linewidth}
\includegraphics[width=\textwidth]{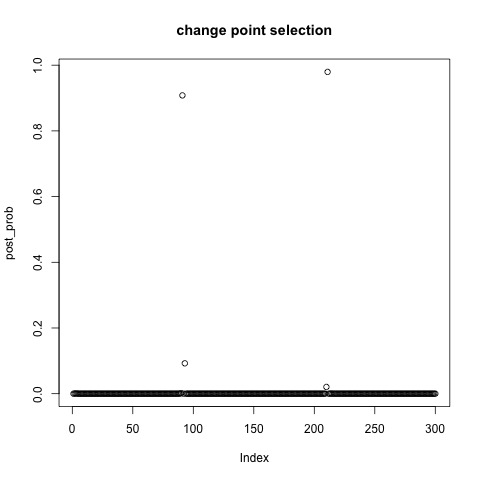}
\caption{Change point recovery}
\label{fig:ar1a}
\end{subfigure}
\begin{subfigure}{0.45\linewidth}
\includegraphics[width=\textwidth]{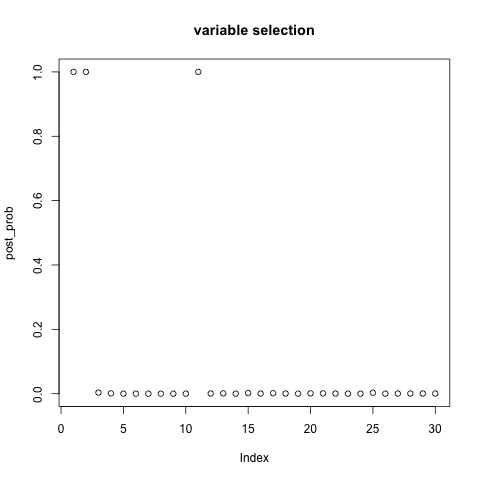}
\caption{Globally active variable selection performance}
\label{fig:ar1b}
\end{subfigure}
\begin{subfigure}{0.5\linewidth}
\includegraphics[width=\textwidth]{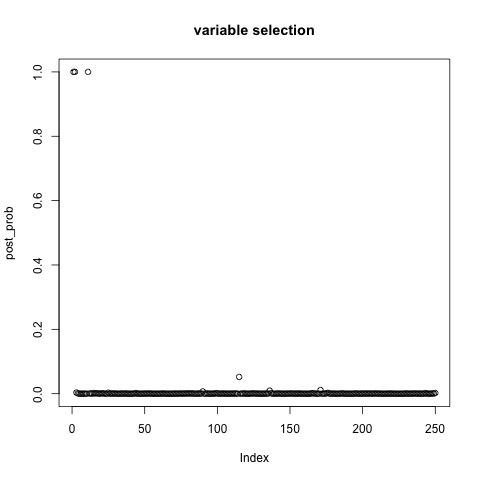}
\caption{Globally active variable selection performance}
\label{fig:ar1c}
\end{subfigure}
\caption{Changing linear regression example showing performance of the hierarchical Bayesian model, in the presence of the  AR(1) component}
\label{fig:ar1}
\end{figure}

\section{Real Data Applications}
Next the proposed method is applied  to detecting change points in crime data from Little Rock, AR. Among all cities in the United States with at least $100,000$ residents, Little Rock is ranked in the top $10$ for the highest violent crime (7th) and property crime (4th) rates in 2015 \citep{chillar2020unpacking}.

For a piecewise constant mean model, there are competing methods against which the proposed method will be compared, as mentioned earlier there is essentially no comprehensive framework for model selection under change point. The data is preprocessed by a square root transformation and standardization. 

For comparing and contrasting the proposed method on the weekly burglary and breaking and entering data from Little Rock from  2017, the empirical Bayes method proposed by \citet{martin2017asymptotically}, as well as the PELT (Pruned Exact Linear Timing) method by \citep{killick2012optimal}, are used. As it is impossible to know if there should be any ``true" change-points in 2017 data in the absence of additional information, this can be regarded as a preliminary exploratory analysis. As Fig. \ref{fig:aa-cp} suggests, the change-points recovered by the proposed method mostly agree with those by the PELT method, while the Empirical Bayes method seems to be conservative and does not detect any change points in the data. 

\begin{figure}[ht!]
\centering
\includegraphics[height=2.2in,width=3.9in]{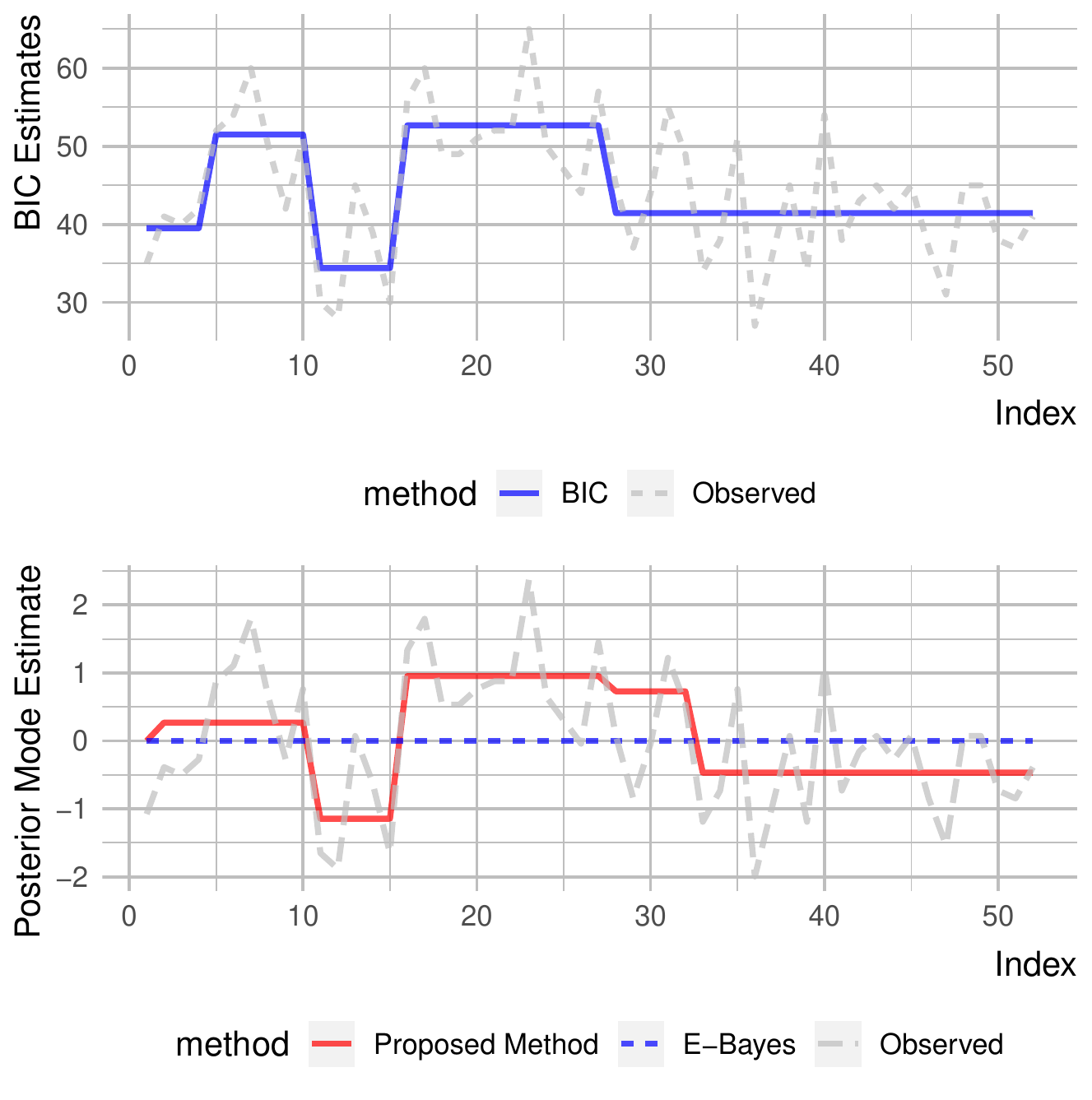}%
\caption{Changepoints detected by the PELT-BIC method (top panel) and the proposed approach and Empirical Bayes approach \citep{martin2017asymptotically} (bottom panel) for the burglary and breaking and entering activities in Little Rock in 2017. Data is standardized for the bottom panel.}%
\label{fig:aa-cp}%
\vspace{-.2in}
\end{figure}


\newpage 

\section{Appendix}\label{sec:appendix}

\section*{Proof of Theorem \ref{thm1}}

We show our result first for a model $M_{l,m_\beta}=M^*_{l,m_\beta}$ with $l$ change points,  where, the $P_j$'s are a refinement of $P^*_{j}$'s, the true partition, and $m_\beta$ contains the true covariate combination, i.e. $m_\beta\supset m_{\beta^*}$.  That is for $t_1,\cdots, t_l$ be the proposed change point, and $\tilde{t}_1,\cdots, \tilde{t}_l$ be the change points corresponding to some refinement of the true partition  $t^*_1,\cdots,t^*_{l*}$ such that $t_i = \tilde{t}_i, \forall i $. Then we show that $BF(M^*_{l,m_\beta},M^*_{l^*,m_\beta})\rightarrow 0$ and  $BF(M^*_{l^*,m_\beta},M^*)\rightarrow 0$. (Case i) 

Then we show the case where $t_1, \ldots, t_l$ be the proposed change point, and $\tilde{t}_1, \ldots, \tilde{t}_l$ be change point corresponding to any refinement of the true partition  $t^*_1,\ldots,t^*_{l*}$ and $\max_i|t_i-\tilde{t}_i|>\epsilon>0$ (Case ii). Without loss of generality, we assume $\epsilon< \min_i|t_i-t_{i-1}|; i=1, \ldots, l$. 

Then we show the case where the $m_\beta$ does not contain the true covariate combination for some partition for $l \geq l^*$. (Case iii) 

Finally we show for the case where $l<l^*$. (Case iv)

\subsection*{ \underline{\bf Case (i) }}

We show  $BF(M^*_{l,m_\beta},M^*_{l^*,m_\beta}) \to 0$ and $BF(M^*_{l^*,m_\beta},M^*) \to 0$ in probability. 

\underline{$BF(M^*_{l,m_\beta},M^*_{l^*,m_\beta}) \to 0$}

Writing the $P^*_{j}$ for some $j$ for the true partition as the union of $P^*_{j_1}, \cdots, P_{j^*_k}$, $k>1$, where $P^*_{j_i}=P_{j'}$ for some $j'\in \{1,\cdots,l+1\}$. Let ${\bf Y}_n^{j^*}$ be the vector of $y_i$'s in $P^*_j$, and $X_{j^*}$ be the corresponding covariate matrix.  Similarly, $X_{j_i^*}$ be the covariate matrix for $P^*_{j_i}$. Let $\hat{\bf Y}_n^{j^*}$ and $\hat{\bf Y}_n^{j_i^*}$ be their least square fit based on $P^*_j$ and $P^*_{j_i}$'s. Let $\hat{{\bf Y}}^{j^*}_{n,i}$ be the sub-vector  of $\hat{{\bf Y}}^{j^*}_{n}$ corresponding to the observations in $P^*_{j_i}$. 

Now 
\begin{align*}
-\log L({\bf Y}^{j^*}_n \mid M^*_{l,m_\beta}) +\log L ({\bf Y}^{j^*}_n \mid M^*_{l^*,m_\beta}) & =
-\frac{1}{2} {\rm det}(X_{j^*}'X_{j^*})+\frac{1}{2}\sum_{i=1}^k {\rm det}(X_{j_i^*}'X_{j_i^*})-\frac{1}{2\sigma^2}\ \| {\bf Y}^{j^*}_n-\hat{{\bf Y}}^{j^*}_n \|^2\\ 
& + \sum_{i=1}^k\frac{1}{2\sigma^2}\|{\bf Y}^{j_i^*}_n-\hat{{\bf Y}}^{j_i^*}_n\|+c_{l,l^*}
\end{align*}

where $c_{l,l'}=O(k(l-l'))$, where $k$ is te number of covariates in $m_\beta$. Now, \[ \|{\bf Y}^{j^*}_n-\hat{{\bf Y}}^{j^*}_n\|^2=\sum_{i=1}^k \|{\bf Y}^{j_i^*}_n-\hat{{\bf Y}}^{j^*}_{n,i}\|^2=\sum_{i=1}^k \|{\bf Y}^{j_i^*}_n-\hat{\bf Y}_n^{j_i^*}+\hat{\bf Y}_n^{j_i^*}-\hat{{\bf Y}}^{j^*}_{n,i}\|^2=\sum_{i=1}^k\|{\bf Y}^{j_i^*}_n-\hat{\bf Y}_n^{j_i^*}\|^2+ \|\hat{\bf Y}_n^{j_i^*}-\hat{{\bf Y}}^{j^*}_{n,i}\|^2.\]
The last step follows as the  cross product term is equal to zero, as for $\delta_{i,m}$ the $m$ th component of $(\hat{\beta}^j-\hat{\beta}^{j_i})$, $x_{i',m}$ the $m$ th component of  the covariate vector $x_{i'}$ for  observation $i'$, $\sum_{i'\in P^*_{j_i}}(y_{i'}-x_{i'}\hat{\beta}^{j_i})x_{i'}(\hat{\beta}^j-\hat{\beta}^{j_i})=\sum_{m}\delta_{i,m}\sum_{i'\in P^*_{j_i}} (y_{i'}-x_{i'}\hat{\beta}^{j_i})x_{i',m}=0$.

Next, 
\[ \|\hat{\bf Y}_n^{j_i^*}-\hat{{\bf Y}}^{j^*}_{n,i}\|^2\leq 2[(\hat{\beta}^{j}-{\tilde{\beta}}^{*j})'(X_{j^*_i}'X_{j^*_i})(\hat{\beta}^{j}-{\tilde{\beta}}^{*j})+(\hat{\beta}^{j_i}-{\tilde{\beta}}^{*j})'(X_{j^*_i}'X_{j^*_i})({\hat{\beta}}^{j_i}-{\tilde{\beta}}^{*j})],\]
where ${\tilde{\beta}}^{*j}$ be the true coefficient vector corresponding to $P^*_j$ with zeros in place for covariates that are in $m_\beta$ but not in $m_{\beta^*}$. We have,  $(\hat{\beta}^{j_i}-{\tilde{\beta}}^{*j})'(X_{j^*_i}'X_{j^*_i})(\hat{\beta}^{j_i}-{\tilde{\beta}}^{*j})\sim \chi^2_{q}$ for $q=\#\{m_\beta\}$  and $q\leq q_n$. Again, $ (\hat{\beta}^{j}-{\tilde{\beta}}^{*j})'(X_{j^*_i}'X_{j^*_i})(\hat{\beta}^{j}-{\tilde{\beta}}^{*j})=Z'A^{1/2}(X_{j^*_i}'X_{j^*_i})A^{1/2}Z$ where $Z\sim N({\bf 0},\sigma^2I_q)$ and $A^{-1}=X_{j^*}'X_{j^*}$. Given the eigenvalues of $A^{1/2}(X_{j^*_i}'X_{j^*_i})A^{1/2}$ are bounded away from infinity, we have   $ (\hat{\beta}^{j}-{\beta^*}^j)'(X_{j^*_i}'X_{j^*_i})(\hat{\beta}^{j}-{\beta^*}^j)\sim O_p(q_n)$.

Next, we consider the determinant term. Let $n_j^*$ be the number of observations in $P_j^*$. Similarly we define $n^*_{j_i}$ for $P^*_{j_i}$'s, and $\sum_{i}n^*_{j_i}=n_j^*$, and $\text{liminf } \frac{n_{j_i}^*}{n} >0$. Let $lim_{n \uparrow\infty} \frac{n_{j_i}^*}{n ^*_j}=\alpha_i$ . From the fact that $\log det(X_{j_i^*}'X_{j_i^*})=q[\log n^*_{j} +\log \alpha_i]+e_q$, for large $n$ where $|e_q|\leq q[ |log a|+|\log b|]$, we have  
 \[-\log({\bf Y}^{j^*}_n|M^*_{l,m_\beta})+\log ({\bf Y}^{j^*}_n|M^*_{l^*,m_\beta})\geq  \frac{1}{2} (k-1) q\log n-qO_p(1)+C_0\]
for a generic constant $C_0$ and hence, $BF(M^*_{l,m_\beta},M^*_{l^*,m_\beta})\rightarrow 0$ in probability. 

\underline{$BF(M^*_{l^*,m_\beta},M^*)\rightarrow 0$}

For each partition $P^*_j$, \[2\log L({\bf Y}^{j^*}_n|M^*)-2\log L ({\bf Y}^{j^*}_n|M^*_{l^*,m_\beta})\sim  (q-\#\{m_{\beta^*}\}) \log n +\frac{1}{\sigma^2}\chi^2_{(q-\#\{m_{\beta^*}\})}+(q-\#\{m_{\beta^*}\})C'_0\]for a generic constant $C'_0$, where $q=\#\{m_\beta\}$. Hence, the result follows.

\subsection*{ \underline{\bf Case (ii) }}

From \eqref{decomp0}, we have $E_1=\sum_jE_{1,j}$, $E_2=\sum_jE_{2,j}$, and $E_3=\sum_jE_{3,j}$, where for $e_i=y_i-\theta^*_i$,
\begin{eqnarray}
E_{1,j} &= &\sum_{i \in P_j}e_i(x_iE[\hat{\beta}^j] -x_i\tilde{\beta}^{*}_i); \; E_{2,j}=\sum_{i \in P_j}e_i(x_i\hat{\beta}^j -x_iE[\hat{\beta}^{j}]); \nonumber \\
E_{3,j} &=& \sum_{i\in P_j}(x_iE[\hat{\beta}^j] -x_i\tilde{\beta}^{*}_i)(x_i\hat{\beta}^j -x_iE[\hat{\beta}^{j}])
\label{e_rate}
\end{eqnarray}

Let, $\delta^{n,i}_{m}$ be the $m$ th component of $E[\hat{\beta}^j] -\tilde{\beta}^{*}_i$, and $\delta^{n,E}_{m}$ be the $m$ th component of $\hat{\beta}^j-E[\hat{\beta}^j]$. Note that $\delta^{n,i}_{m}$ can take $l^*+1$ many different values, as $i \in P^*_j$ for some  $j\in\{1,\cdots, l^*+1\}$. We denote it by $\delta^n_{m,l'}$ for $i \in P_j\cap P^*_{l'}$.

Note that $E[\hat{\beta}^j]$ is linear combination of ${\tilde{\beta}}^{*j}$'s of the form $(\sum_{i\in I_j} B_i'B_i)^{-1}(\sum_{i\in I_j} B_i'B_i{\tilde{\beta}}^{*i})$, where $I_j$ is a subset of $\{1,\cdots,l^*+1\}$, and $i\in I_j$ if $P^*_i\cap P_j\neq \phi$ for some $i$, and $B_i$ be the corresponding covariate matrix for observations in $P^*_i\cap P_j$. Hence,  $E[\hat{\beta}^j]$, bounded at each component by  condition A6. For $E_{2,j}$, $E_{3,j}$ we use the fact  that $n$ dimensional multivariate normal with bounded variance, the absolute value maximum is bounded by $\log n$ for large $n$. 
Then,
\[|E_{1,j}|=|\sum_{m=1}^q\sum_{l'=1}^{l^*+1}\delta^n_{m,l'}\sum_{i \in P_j \cap P_{l'}^*}e_ix_{i,m}|=o_p(q\sqrt{n}\log n),\]
\[|E_{2,j}|=|\sum_{m=1}^q (\sqrt{n}\delta^{n,E}_{m})\frac{1}{\sqrt{n}}\sum_{i \in P_j}e_ix_{i,m}|=o_p(q\log n),\]
\[|E_{3,j}|=|\frac{1}{\sqrt{n}}\sum_{l'=1}^{l^*+1}\sum_{i\in P_j\cap P^*_{l'}}\sum_{m,m'}\delta^{n}_{m,l'}(\sqrt{n}\delta^{n,E}_{m'})x_{i,m}x_{i,m'}|=o_p(\sqrt{n}q^2\log n).\]

Without loss of generality we assume that $l=l^*$ (otherwise, we can show for a refinement of true partition for change points
 $\tilde{t}_1,\cdots, \tilde{t}_l$, such that $|t_j-\tilde{t}_j|>0$ and $\tilde{t}_j\in \{t^*_1,\cdots,t^*_{l^*}\}$, then use the result proved in Case (i)).

Let $|t_j-t^*_j|>\epsilon$ and the model be denoted by $M^\epsilon_{l^*,m_\beta}$. Then,

\begin{eqnarray}
-\log L({\bf Y}_n|M^\epsilon_{l^*,m_\beta})+ \log L({\bf Y}_n|M^*_{l^*,m_\beta}) =-\frac{1}{2}[\sum_j\log(det(X_{j^*}'X_{j^*}))-\sum_j\log(det(X_{j}'X_{j}))]+&\nonumber \\
 \frac{1}{2\sigma^2}\sum_{j:P_j\cap P_j^*\neq \phi}({\tilde{\beta}}^{*j}-E[\hat{\beta}^j])'X_{j\cap j^*}'X_{j\cap j^*}({\tilde{\beta}}^{*j}-E[\hat{\beta}^j])+&\nonumber \\ \frac{1}{2\sigma^2}\sum_j\sum_{k:k\neq j, P_j\cap P_k^*\neq \phi}({\tilde{\beta}}^{*k}-E[\hat{\beta}^j])'X'_{j\cap k^*}X_{j\cap k^*}(\tilde{\beta}^{*k}-E[\hat{\beta}^j])+&\nonumber\\
\frac{1}{2\sigma^2}\sum_j(\hat{\beta}^j-\E[\hat{\beta}^j])'X_j'X_j(\hat{\beta}^j-E[\hat{\beta}^j])-&\nonumber\\\frac{1}{2\sigma^2}\sum_j(\hat{\beta}^{*j}-\E[\hat{\beta}^{*j}])'X_{j*}'X_{j*}(\hat{\beta}^{*j}-E[\hat{\beta}^{*j}])-R_n&\nonumber\\
\label{decomp11}
\end{eqnarray}
where $R_n=o_p(\log n \sqrt {n} q_n^2)$, and $\sum_j(\hat{\beta}^{*j}-\E[\hat{\beta}^{*j}])'X_{j*}'X_{j*}(\hat{\beta}^{*j}-E[\hat{\beta}^{*j}])\sim \chi^2_{ql^*}$.

As, $\beta^{*j}\neq \beta^{*{(j+1)}}$ for $j=1,\cdots l^*$. Therefore,  one of the first two quadratic form sums is computed for a nonzero vector for some $j$. The first quadratic is based on $\sim n(1-\epsilon)$ observations and the second one is based on $\sim n\epsilon$ observations, and therefore by A6
\[-\log L({\bf Y}_n|M^{\epsilon_n}_{l^*,m_\beta})+ \log L({\bf Y}_n|M^*_{l^*,m_\beta})\succeq n\epsilon-R_n\rightarrow \infty\]
which proves our claim. 

\subsection*{\underline{\bf Case (iii)}}

Case (iii) follows similar to last step in Case (ii), as the bounds on $E_1, E_2, E_3$  are of same order as the earlier part, and we have $\E[\hat{\beta}^j]\neq {\tilde{\beta}}^{*j}$ for some $j$.

\subsection*{\underline{\bf Case (iv)}}

 We have
\begin{eqnarray*}
P({\bf Y}_n \mid M_{l,m_\beta})  =-\frac{1}{2}\sum_j\log(\abs{X_j'X_j})-\frac{1}{2\sigma^2}[\sum_{i}(y_i-\theta^*_i)^2+\\ \sum_j\sum_{j_1:P_j\cap P_{j_1}^*\neq \phi}({\tilde{\beta}}^{*j_1}-E[\hat{\beta}^j])'X_{j\cap{j_1}^*}'X_{j\cap{j_1}^*}({\tilde{\beta}}^{*j_1}-E[\hat{\beta}^j])\\  +\sum_j(\hat{\beta}^j-\E[\hat{\beta}^j])'X_j'X_j(\hat{\beta}^j-E[\hat{\beta}^j]) ]
 -\frac{1}{\sigma^2}[E_1+E_2+E_3]+c_{l,n}+O(1) .
\end{eqnarray*}
where  $X_{j\cap{j_1}^*}$ is the design matrix corresponding to $P_j\cap P_{j_1}^*$ if $P_j\cap P^*_{j_1}\neq \phi$, $j_1\in=\{1,\dots,l^*+1\}$.
For $l<l^*$, we have $min\{\mathscr{L}([t_{i-1},t_i]\cap[t^*_{j-1},t^*_j]),\mathscr{L}([t_{i-1},t_i]\cap[t^*_j,t^*_{j+1}])\}>0$ for some $i,j$,  if $t_1,\dots,t_l$ correspond to the change points in $M_{l,m_\beta}$, and $\mathscr{L}(\cdot)$ is the Lebesgue measure. Then, $\sum_j\sum_{j_1:P_j\cap P_{j_1}^*\neq \phi}({\tilde{\beta}}^{*j_1}-E[\hat{\beta}^j])'X_{j\cap{j_1}^*}'X_{j\cap{j_1}^*}({\tilde{\beta}}^{*j_1}-E[\hat{\beta}^j])\sim O(n)$. The bounds on $E_1,E_2,E_3$ from Case (ii) hold and hence,  $-P({\bf Y}_n \mid M_{l,m_\beta})+P({\bf Y}_n \mid M_{l^*,m_\beta})=O_p(n)$, which proves our claim.

\section*{Proof of Proposition \ref{prop1}}

Marginalizing over the coefficient vector on $P_j$ we have, 
\begin{eqnarray}
\log L({\bf Y}_n^j|M_{l,m_\beta})&=&-\frac{n_j}{2}\log \sigma^2-\log det(X_j'X_j+S_\beta)+\log det(S_\beta) \nonumber \\
&& -\frac{1}{2\sigma^2}[{{\bf Y}_n^j}'{\bf Y}_n^j-{{\bf Y}_n^j}'X_j'(X_j'X_j+S_\beta)^{-1}X_j{\bf Y}_n^j] \label{marcov2}
\end{eqnarray}
where $\sigma^2S^{-1}_\beta$ is prior variance covariance matrix for the coefficients in $m_\beta$.

Next, we consider $(X_j'X_j+S_\beta)^{-1}$. Let $AA=X_j'X_j$, where $A$ is a positive definite matrix with eigenvalues of the order of $\sqrt{n_j}$.

Then, using the Neumann series expansion \citep[][p.348]{horn2012matrix}, we arrive at: 
\begin{eqnarray}
(X_j'X_j+S_\beta)^{-1}=A^{-1}(I_q+A^{-1}S_\beta A^{-1})^{-1}A^{-1}=A^{-1}(I_q-B-B^2-B^3-\cdots)A^{-1}
\label{mat_inf}
\end{eqnarray}
where $B=A^{-1}S_\beta A^{-1}$ has Eigen values of the order $n^{-1}$ and the above expression is valid for sufficiently large $n$. 

Hence,
\begin{eqnarray}
{{\bf Y}_n^j}'X_j'(X_j'X_j+S_\beta)^{-1}X_j{\bf Y}_n^j= {{\bf Y}_n^j}'X_j'A^{-1}A^{-1}X_j{\bf Y}_n^j-\sum_{k=1}^\infty{{\bf Y}_n^j}'X_j'A^{-1}B^kA^{-1}X_j{\bf Y}_n^j.
\label{res_inf}
\end{eqnarray}
Note that $\|{\bf Y}_n^j\|^2\leq 2[\|\underline{\theta}^n\|^2+\|\underline{e}^n\|^2] \preceq n $ with probability one, $\underline{\theta}^n$ is the vector of $\theta_i$'s and $\underline{e}^n$ the vectors of $e_i$'s. The Eigen values of $A^{-1}B^kA^{-1}$ is of the order of $n^{-k-1}$.

Hence, $\|{{\bf Y}_n^j}'X_j'A^{-1}B^kA^{-1}X_j{\bf Y}_n^j\|\sim n^2 n^{-k-1}$, and therefore, $\sum_{k=1}^\infty{{\bf Y}_n^j}'X_j'A^{-1}B^kA^{-1}X_j{\bf Y}_n^j$ is $O(1)$ with probability one. 

Let $\frac{\lambda_i}{n_j}$ be the Eigen values of $B$ for $i=1,\cdots,q$, where $\lambda_i>0$ and bounded. Again, $\log det((X_j'X_j+S_\beta))=\log (det(A))^2+\log  (det(I+B)^{-1})=\log det(X_j'X_j)-\sum _{i=1}^q\log(1+\frac{\lambda_i}{n_j})=\log det(X_j'X_j)-O(\frac{q}{n})$.

Hence, combining the calculation of the determinant and the residual calculation from equation \ref{res_inf}, the result follows.

\section*{Proof of Theorem \ref{thm2}}
We assume that $m_\beta \supseteq m_{\beta^*}$. For the case, where $m_\beta$ does not contain the true covariates, the proof will follow similar to the proof of Case iii of Theorem \ref{thm1}. The proof for $m_\beta \supseteq m_{\beta^*}$ is given as the following. 

From equation \eqref{e_rate}, we decompose $E_{1,j}$ and $E_{2,j}$ in two parts $E_{1,j\cap j^*}$, $E_{1,j-j^*}$, and  $E_{3,j\cap j^*}$, $E_{3,j-j^*}$, where 
$E_{1,j\cap j^*}=\sum_{i \in P_j\cap P^*_j}e_i(x_iE[\hat{\beta}^j] -x_i\tilde{\beta}^{*j})$, and $E_{1,j- j^*}=\sum_{i \in P_j- P^*_j}e_i(x_iE[\hat{\beta}^j] -x_i\tilde{\beta}^{*j})$. Similarly, $E_{3,j\cap j^*}$, $E_{3,j-j^*}$ are defined. 

As in the proof of Theorem \ref{thm1}, $\delta^{n,i}_{m}$ be the $m$ th component of $E[\hat{\beta}^j] -\tilde{\beta}^{*}_i$, and $\delta^{n,E}_{m}$ be the $m$ th component of $\hat{\beta}^j-E[\hat{\beta}^j]$, and $\delta^{n,i}_{m}$ can take $l^*+1$ many different values, as $i \in P^*_j$ for some  $j\in\{1,\cdots, l^*+1\}$. It is denoted  by $\delta^n_{m,l'}$ for $i \in P_j\cap P^*_{l'}$.

Note that, number of observations in $P_j- P^*_j$ is $\preceq n\epsilon_n$ and $\delta^{n,i}_{m}$ in $P_j\cap P_j^*$ is of the order of $\epsilon_n$ (follows from Lemma \ref{lem_bias}).
Hence, 
\[
|E_{1,j\cap j^*}| \leq \sum_m |\delta^n_{m,j}|n^{1/2}n^{-1/2}|\sum_{i \in P_j\cap P^*_j}e_ix_{i,m}|\preceq \epsilon_n n^{1/2}q_n \log n
\]
for large $n$ almost surely. Next, 

\[
|E_{1,j- j^*}|\leq \sum_{l'}\sum_m |\delta^n_{m,l'}|\sqrt{n \epsilon_n}\frac{1}{\sqrt{n \epsilon_n}} |\sum_{i \in P_j\cap P^*_{l'};l'\neq j}e_ix_{i,m}|\preceq q_n \sqrt{n\epsilon_n} \log n
\]
almost surely.

Using, similar calculation, with probability one, 
\[
|E_{3,j\cap j^*}| \leq \sum_{m,m'} |\delta^n_{m,j}|n^{-1/2}|\sum_{i \in P_j\cap P^*_j}x_{i,m'}(n^{1/2}\delta^{n,E}_{m'})x_{i,m}|\preceq \epsilon_n n^{1/2}q^2_n (\log  n),
\]
and,
\[
|E_{3,j- j^*}| \leq\sum_{l'} \sum_{m,m'} |\delta^n_{m,l'}|n^{-1/2}  |\sum_{i \in (P_j- P^*_j)\cap P^*_{l'}:l'\neq j}x_{i,m'}(n^{1/2} \delta^{n,E}_{m'})x_{i,m}|\preceq \epsilon_n n^{1/2}q^2_n (\log  n).
\]

As in the proof of Theorem 3, here we use the fact that the absolute value of the maximum of an $n$-dimensional multivariate normal is less than $\log n$ for large $n$. 

Hence, from equation \eqref{decomp11}, using the stricter bound for $E_1,E_2,E_3$ from the above derivation, 

\begin{eqnarray}
-\log L({\bf Y}_n|M^\epsilon_{l,m_\beta})+\log L({\bf Y}_n|M^*_{l^*,m_\beta})\succeq n\epsilon_n -\sqrt{n}\epsilon_n q_n^2\log n-q_n \sqrt{n\epsilon_n} \log n+q_n\log n\rightarrow \infty\nonumber\\
\label{rate_decomp}
\end{eqnarray}
as $n\rightarrow \infty$, which proves our result, as $BF(M^*_{l^*,m_\beta},M^*)\rightarrow 0$ in probability, as in Theorem \ref{thm1} (Case i,second part), by BIC type quantities for each partition $P^*_j$.

\begin{lemma}
Under the setting of Theorem \ref{thm2}, we have $\|E[\hat{\beta}^j]-{\tilde{\beta}}^{*}_i\|_\infty\preceq \epsilon_n$, for $i \in P_j\cap P_j^*$.
\label{lem_bias}
\end{lemma}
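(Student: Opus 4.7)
My plan is to write $E[\hat{\beta}^j]$ explicitly as a weighted combination of the true coefficients $\tilde{\beta}^{*k}$ attached to the true partitions $P_k^*$ that meet $P_j$, and then to exploit that under the hypothesis of Theorem~\ref{thm2} the overlap $|P_j\cap P_k^*|$ is of order $n\epsilon_n$ for every $k\neq j$, so that the ``off-target'' contribution to the bias is $O(\epsilon_n)$.

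Using $E[y_i]=x_i'\tilde{\beta}^{*k}$ for $i\in P_j\cap P_k^*$ together with the identity $\sum_k X_{j\cap k^*}'X_{j\cap k^*}=X_j'X_j$ (the sum running over those $k$ with $P_j\cap P_k^*\neq\emptyset$), the least-squares formula yields
\[
E[\hat{\beta}^j]-\tilde{\beta}^{*j}\;=\;\sum_{k\neq j}(X_j'X_j)^{-1}X_{j\cap k^*}'X_{j\cap k^*}\bigl(\tilde{\beta}^{*k}-\tilde{\beta}^{*j}\bigr),
\]
or equivalently, for each coordinate $m$,
\[
\bigl(E[\hat{\beta}^j]-\tilde{\beta}^{*j}\bigr)_m \;=\; \sum_{k\neq j}\sum_{i\in P_j\cap P_k^*}h_{i,m}\,x_i'D_k,\qquad h_i:=(X_j'X_j)^{-1}x_i,\ D_k:=\tilde{\beta}^{*k}-\tilde{\beta}^{*j}.
\]
I would then apply Cauchy--Schwarz to each inner sum. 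The leverage factor $\sum_{i\in P_j\cap P_k^*}h_{i,m}^2$, which is the $(m,m)$ entry of $(X_j'X_j)^{-1}X_{j\cap k^*}'X_{j\cap k^*}(X_j'X_j)^{-1}$, has operator norm at most $(|P_j\cap P_k^*|/|P_j|)\cdot|P_j|^{-1}\preceq\epsilon_n/n$ by (A6); the signal factor $\sum_{i\in P_j\cap P_k^*}(x_i'D_k)^2$ is bounded by $\preceq n\epsilon_n$ using (A6), (A1), and the uniform bound on $D_k$ from (A2). Multiplying the square roots gives each $k$-summand of size $O(\epsilon_n)$ uniformly in $m$, and the sum over the at most $l^*$ values of $k\neq j$ preserves this bound.

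The main technical obstacle is the sup-norm requirement: a direct $\ell^2\!\to\!\ell^2$ operator-norm estimate on $(X_j'X_j)^{-1}X_{j\cap k^*}'X_{j\cap k^*}$ paired with Euclidean control of $D_k$ supplies only an $\epsilon_n\sqrt{q_n}$ bound, which would be too loose for the rate derivation of Theorem~\ref{thm2}. The leverage-based Cauchy--Schwarz route above avoids this $\sqrt{q_n}$ loss by pairing a single row of $(X_j'X_j)^{-1}$ with the length-$|P_j\cap P_k^*|$ residual vector $X_{j\cap k^*}D_k$, so that (A6), (A1), and (A2) act simultaneously to deliver the sharp $O(\epsilon_n)$ control required by the statement.
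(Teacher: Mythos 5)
Your identity $E[\hat{\beta}^j]-\tilde{\beta}^{*j}=\sum_{k\neq j}(X_j'X_j)^{-1}X_{j\cap k^*}'X_{j\cap k^*}(\tilde{\beta}^{*k}-\tilde{\beta}^{*j})$ is exact and correct, and your leverage-score Cauchy--Schwarz bound delivers the claimed $O(\epsilon_n)$ sup-norm control, so the proof works --- but it is a genuinely different route from the paper's. The paper instead writes $X_j'X_j=CC+D$ with $CC=Z_j'Z_j$ (the block from $P_j\cap P_j^*$) and $D=\sum_{l'\neq j}Z_{l'}'Z_{l'}$, expands $(CC+D)^{-1}=C^{-1}(I-B-B^2-\cdots)C^{-1}$ via a Neumann series with $B=C^{-1}DC^{-1}$ having eigenvalues of order $\epsilon_n$, and then reads off that the zeroth-order term reproduces $\tilde{\beta}^{*j}$ while every correction term carries at least one factor of $\epsilon_n$. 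Your approach buys two things: it avoids the infinite series entirely (the identity is algebraically exact, with no remainder to control), and by pairing the $m$-th row of $(X_j'X_j)^{-1}$ with the vector $X_{j\cap k^*}D_k$ you get a genuinely coordinate-wise bound, whereas the paper passes from operator-norm (eigenvalue) estimates to $\|\cdot\|_\infty$ statements somewhat informally, a step where a spurious $\sqrt{q_n}$ could in principle leak in. The paper's version buys a template that it reuses verbatim in the proof of Proposition \ref{prop1} (the same Neumann expansion for $(X_j'X_j+S_\beta)^{-1}$), which is presumably why it is phrased that way. One point you should make explicit: your claim that the signal factor $\sum_{i\in P_j\cap P_k^*}(x_i'D_k)^2$ is $\preceq n\epsilon_n$ (rather than $n\epsilon_n\|D_k\|_2^2$) requires $\|D_k\|_2=O(1)$, which holds because $D_k=\tilde{\beta}^{*k}-\tilde{\beta}^{*j}$ is supported on the fixed true covariate set $m_{\beta^*}$ and is entrywise bounded by (A2); without that support restriction the bound would degrade by the model size.
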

\begin{proof}
Let $Z_j$ be the design matrix corresponding to observations in $P_j\cap P^*_j$, and $Z_{l'}$ be the matrix corresponding to observations in $P_j\cap P^*_{l'}; l'=1,\cdots,l^*+1, j\neq l'$. Note that number of observations corresponding to $P_j\cap P^*_{l'}$ is less than $n\epsilon_n+1$. 

Then,

\[E[\hat{\beta}^j]=(Z_j'Z_j+\sum_{l',l'\neq j}Z_{l'}'Z_l)^{-1}(X_j'E[{\bf Y}_n^j])=(CC+D)^{-1}(Z_j'Z_j\tilde{\beta}^{*j}+\sum_{l',l'\neq j}Z_{l'}'Z_l\tilde{\beta}^{*l'}),\]
where $C$ is a positive definite matrix with Eigen values of the order of $\sqrt{n_j}$, when $\epsilon_n\rightarrow 0$, and $CC=Z_j'Z_j$. 

Now, writing $(CC+D)^{-1}=C^{-1}(I+C^{-1}DC^{-1})^{-1}C^{-1}$ and from the fact the Eigen values of  $B=C^{-1}DC^{-1}$ is of the order $\epsilon_n$, for large $n$ and therefore,  similar to equation \ref{mat_inf} 
\[(CC+D)^{-1}=C^{-1}(I-B-B^2-\cdots)C^{-1}.\] Hence, 
\[E[\hat{\beta}^j]=C^{-1}C^{-1}CC\tilde{\beta}^{*j}+C^{-1}C^{-1}\sum_{l',l'\neq j}Z_{l'}'Z_l'\tilde{\beta}^{*l'}-\sum_{k=1}^\infty C^{-1}B^kC^{-1}(Z_j'Z_j\tilde{\beta}^{*j}+\sum_{l',l'\neq j}Z_{l'}'Z_l'\tilde{\beta}^{*l'}).\]
We have $C^{-1}B^kC^{-1}$ with Eigen values at most of the order of $b^k\epsilon_n^kn^{-1}$. Also, $C^{-1}C^{-1}Z_{l'}'Z_l'$ has Eigen value at most of the order of $\epsilon_n$. We have $\beta^{*j}$ and $\tilde{\beta}^{*j}$ bounded. Hence, \[\|C^{-1}C^{-1}\sum_{l',l'\neq j}Z_{l'}'Z_l'\tilde{\beta}^{*l'}\|_\infty <c^*\epsilon_n; \text { and }\|C^{-1}B^kC^{-1}(Z_j'Z_j\tilde{\beta}^{*j}+\sum_{l',l'\neq j}Z_{l'}'Z_l'\tilde{\beta}^{*l'})\|_\infty \leq c^*b^k(\epsilon_n^k+b\epsilon_n^{k+1}),\] where $c^*>0$ is an universal constant (using A6). 

Hence,  for $i\in P_j\cap P_j^*$, $\|E[\hat{\beta}^j]-\tilde{\beta}^{*}_i\|_\infty=\|E[\hat{\beta}^j]-\tilde{\beta}^{*j}\|_\infty\preceq \epsilon_n$. 

\end{proof}

\section*{Proof of Theorem \ref{thm1_emp} and \ref{thm2_emp}}

We use, $\hat{\sigma}_{M^*}^2=\|{\bf Y}_n-\hat{\bf Y}_n\|^2/n$ for the true model with change points at $t_i^*$, and $\hat{\sigma}_{M_l}^2=\hat{\sigma}_{M_{l,m_\beta}}^2$ be the estimate corresponding to a  model with change point $l$ change point, with covariate combination given by $m_\beta$. 

From earlier calculation in Proposition 1, replacing the $\sigma^2$ in each partition $P_j$ by $\hat{\sigma}^2 $,
\begin{align*}
 \log L({\bf Y}_n|M_{l,m_\beta}) & =-\frac{1}{2}\sum_j\log( det(X_j'X_j))-\frac{1}{2\hat{\sigma}_{M_l}^2}n\hat{\sigma}_{M_l}^2-\sum_j\frac{n_j}{2}\log \hat{\sigma}_{M_l}^2+O(lq_n).
 \end{align*}

 Hence,
\begin{align*}
 \log L({\bf Y}_n|M_{l,m_\beta})&- \log L({\bf Y}_n|M^*)\\
 =&-\frac{1}{2}[\sum_j\log( det(X_j'X_j)-\sum_j\log( det(X_{j^*}'X_{j^*})]-\sum_j\frac{n_j}{2}\log \hat{\sigma}^2_{M^*}(1+\frac{ \hat{\sigma}_{M_l}^2-\hat{\sigma}_{M^*}^2}{\hat{\sigma}^2_{M^*}})+\\
 &\frac{n}{2}\log \hat{\sigma}^2_{M^*}+O(lq_n)\\
 =&-\frac{1}{2}\sum_j\log( det(X_j'X_j)+\frac{1}{2}\sum_j\log( det(X_{j^*}'X_{j^*})-\frac{n}{2}\log (1+\frac{ \hat{\sigma}_{M_l}^2-\hat{\sigma}_{M^*}^2}{\hat{\sigma}^2_{M^*}})+O(lq_n).
  \end{align*}
Note that  $\hat{\sigma}_{M^*}^2 \rightarrow \sigma^2$ in probability, and  $\frac{ \hat{\sigma}_{M^{\epsilon}_{l}}^2-\hat{\sigma}_{M^*}^2}{\hat{\sigma}^2_{M^*}}>0$, $\frac{ \hat{\sigma}_{M^{\epsilon_n}_{l^*}}^2-\hat{\sigma}_{M^*}^2}{\hat{\sigma}^2_{M^*}}>0$ for large $n$ under the set up of Theorem \ref{thm1_emp} and Theorem \ref{thm2_emp}, for $M^\epsilon_{l,m_\beta}$ and $M^{\epsilon_n}_{l^*,m_\beta}$, respectively. (from Theorems \ref{thm1} ,\ref{thm2} proofs). 

For Theorem \ref{thm1_emp},  for large $n$, we have,
\[ \log L({\bf Y}_n|M^\epsilon_{l,m_\beta})- \log L({\bf Y}_n|M^*)\preceq -\frac{1}{2}\sum_j\log( det(X_j'X_j)+\frac{1}{2}\sum_j\log( det(X_{j^*}'X_{j^*})-\frac{n}{2}c_0\frac{ \hat{\sigma}_{M^\epsilon_l}^2-\hat{\sigma}_{M^*}^2}{\hat{\sigma}^2_{M^*}}\preceq-n\epsilon, \]
as $\sigma^2_{M^\epsilon_l}$ is bounded with probability one, and  $\log (1+\frac{ \hat{\sigma}_{M^\epsilon_l}^2-\hat{\sigma}_{M^*}^2}{\hat{\sigma}^2_{M^*}})>c_0 \frac{ \hat{\sigma}_{M^\epsilon_l}^2-\hat{\sigma}_{M^*}^2}{\hat{\sigma}^2_{M^*}}$ for some small positive  constant $c_0$.  Similarly, the proofs  for  the case $m_\beta$ not containing true covariate combination, and the case $l<l^*$ follow.

Under the setup of  \ref{thm1_emp} for a refinement of true partition, and covariate combination containing the true covariates, $\frac{\hat{\sigma}_{M_l}^2-\hat{\sigma}_{M^*}^2}{\hat{\sigma}^2_{M^*}}\rightarrow 0$ and $\log (1+\frac{ \hat{\sigma}_{M_l}^2-\hat{\sigma}_{M^*}^2}{\hat{\sigma}^2_{M^*}})\sim \frac{ \hat{\sigma}_{M_l}^2-\hat{\sigma}_{M^*}^2}{\hat{\sigma}^2_{M^*}}$. Hence, Theorem  \ref{thm1} proof (Case i)   can be repeated, and we have $\log L({\bf Y}_n|M_{l,m_\beta})- \log L({\bf Y}_n|M^*)\preceq -q_n\log n$.

For Theorem \ref{thm2_emp},   we note that $\log (1+\frac{ \hat{\sigma}_{M^{\epsilon_n}_{l^*}}^2-\hat{\sigma}_{M^*}^2}{\hat{\sigma}^2_{M^*}})\succeq  \frac{ \hat{\sigma}_{M^{\epsilon_n}_{l^*}}^2-\hat{\sigma}_{M^*}^2}{\hat{\sigma}^2_{M^*}}$ for large $n$,  as $  \hat{\sigma}_{M^{\epsilon_n}_{l^*}}^2-\hat{\sigma}^2
_{M^*} \geq0$ and is of the order of $ \epsilon_n$ for $m_\beta\supset m_{\beta^*}$,  from the fact $\frac{\log (1+x)}{x} \rightarrow 1$ as $x \rightarrow 0$. Hence,  Theorem \ref{thm2} proof  can be repeated. 

\subsection*{Addressing Remark \ref{diff_var}}

If $\sigma^2=\sigma^2_j$ for $P_j^*$, then $\hat{\sigma}_{M^*}^2$ converges to $ \sum_{j=1}^{l^*}\frac{n_j^*}{n}\sigma^2_j>0$ in probability. Hence, the proofs of Theorem \ref{thm1_emp} and Theorem \ref{thm2_emp} hold. 
\section*{Proof of Theorem \ref{vr_selection2}}
Let $M_{l^*_{t^*},m_\beta}$ (denoted by $M_{l^*,m_\beta}$ for convenience) be the model with covariate combination given by $m_\beta$ and true change points $t_1^*,\cdots,t_l^*$. 
First we show that $\frac{ \sum_{m_\beta:m_\beta\neq m_{\beta^*}}\Pi(M_{l^*,m_\beta}|\cdot)}{\Pi(M^*|\cdot)}\rightarrow 0$ in probability (Part 1). Then, we show that  $\frac{ \sum_{m_\beta:m_\beta\neq m_{\beta^*}}\Pi(M^{\xi}_{l,m_\beta}|\cdot)}{\Pi(M^*|\cdot)}\rightarrow 0$, where $M^{\xi}_{l,m_\beta}$ is a model with change points $t_1,t_2,\cdots,t_l$ such that for the refinements of true partition corresponding to change points,  $\tilde{t}_1,\cdots,\tilde{t}_l$, such that $inf_{\tilde{t}_1,\cdots,\tilde{t}_l}max_i|t_i-\tilde{t_i}|=\xi$ (in Part 2). 
Next, we address the case where $t_1,\dots,t_l$ are change points and $l<l^*$ (Part 3). For the case, $t_1,\cdots,t_l$ is a refinement of $t_1^*,\cdots,t_l^*$, the proof follows from Part 1, by considering $M^*_{l}$ instead of $M^*$(true covariate and the change points $t_1,\dots,t_l$) and concluding $\frac{ \sum_{m_\beta:m_\beta\neq m_{\beta^*}}\Pi(M_{l,m_\beta}|\cdot)}{\Pi(M^*_l|\cdot)}\rightarrow 0$ in probability and from the fact $\frac{\Pi(M^*_l|\cdot)}{\Pi(M^*|\cdot)}\rightarrow 0$ in probability.

\subsection*{\underline{Part 1:$\frac{ \sum_{m_\beta:m_\beta\neq m_{\beta^*}}\Pi(M_{l^*,m_\beta}|\cdot)}{\Pi(M^*|\cdot)}\rightarrow 0$}}

Let $R_{t}^{(j)}$ be the residual sum of square for $P_{j}^*$ under $M^*$. Let $m_\beta\supset m_{\beta^*}$. For $\#\{m_\beta\}=q$ and $\#\{m_{\beta^*}\}=t$. Then we show that for residual sum of square for $m_{\beta^*}$, $R_m^{(j)}$
\[ P(R_t^{(j)}-R_m^{(j)} \geq q-t+  (q-t)\alpha (\log n)^{1+\alpha_1}  ; \text{  for some } m_{\beta}\supset m_{\beta^*}, \{m_\beta\}=q )\leq d_n^{q-t} e^{-c(q-t)\alpha (\log n)^{1+\alpha_1}}  \]
for some universal constant $c>0$, for any $\alpha>0$. Hence,
\begin{align*}
P(R_t^{(j)}-R_m^{(j)} \geq (q-t)+  (q-t)\alpha( \log n)^{1+\alpha_1} ; \text{  for some } m_{\beta} \supset m_{\beta^*}, & \{m_\beta\}=q, \text{for some } q )\\
&\leq \sum_{q=t+1}^{q_n} d_n^{q-t} e^{-c\alpha(q-t)(\log n)^{1+\alpha_1}}.  
\end{align*}

Here, $d_n=p$, the number of available covariates (depending on $n$) and   $\log d_n=O(\log n)$ (by assumption) and  $\delta_n=\sum_{q=t+1}^{q_n} d_n^{(q-t)} e^{-c\alpha(q-t) (\log n)^{1+\alpha_1}}\rightarrow 0$.

Similarly, for any $m_\beta \not \supset m_{\beta*}$, let $R_{m'}^{(j)}$ be the residual sum of square for $m_\beta\cup m_{\beta^*}$.  A conservative bound is given by,
\[ P(R_t^{(j)}-R_{m'}^{(j)} \geq q+  \alpha q( \log n)^{1+\alpha_1}  ; \text{  for some } m_{\beta}\not \supset m_{\beta^*}, \#\{m_\beta\}=q, \text{ for some } q )\leq \sum_{q=1}^{q_n} d_n^q e^{-c\alpha q (\log n)^{1+\alpha_1}}.  \]

For, any $\tilde{m}_\beta$ missing at least one of the true covariates. Let, $R_{m'}^{(j)}$ be the residual sum of square for $\tilde{m}_\beta \cup m_{\beta^*}$, and $R_m^{(j)}$ for $\tilde{m}_\beta$. Then,
\[R_{m}^{(j)}-R_{m'}^{(j)}=(\hat{\beta}^{(j)}_{m'}-\hat{\beta}^{(j)}_m)'X_{j^*}'X_{j*}(\hat{\beta}^{(j)}_{m'}-\hat{\beta}^{(j)}_m),\]
where $\hat{\beta}^{(j)}_{m'},\hat{\beta}^{(j)}_m$ are corresponding least square estimates for $\tilde{m}_\beta \cup m_{\beta^*}$ and $\tilde{m}_\beta$, respectively, with entries corresponding to coefficients not in $m_\beta$ but in $\tilde{m}_\beta$ are filled with zero in $\hat{\beta}^{(j)}_m$, and $X_{j^*}$ is the design matrix for $P_j^*$ with covariates corresponding to $\tilde{m}_\beta \cup m_{\beta^*}$. 
Let $\delta>0$ be the minimum value of  the true absolute coefficient vector over all $P_j^*$. Then corresponding least square estimates for $\tilde{m}_\beta$,   for coefficients  present in true model has absolute less than $3\delta/4$ with probability less than $e^{-c_1n}$ for some universal constant  $c_1>0$ and hence, the probability that some covariate belonging to true model has coefficient estimate less than $3\delta/4$ for some $\tilde{m}_\beta \cup m_{\beta^*}$ is less than $d_n^{q_n}e^{-c_1n}=o(\delta_n)$. Similarly, for a covariate not present in the true model the corresponding coefficient has absolute value less than $\delta/4 $ with probability $1-o(\delta_n)$ over all possible covariate combination. Hence, 
\begin{eqnarray}
P(R_{m}^{(j)}-R_{m'}^{(j)}\succeq n, \text{ for all covariate combination })\geq 1-\delta_n,\nonumber \\
P(R_{m}^{(j)}-R_{t}^{(j)}\succeq n, \text{ for all covariate combination })\geq 1-2\delta_n.
\label{non_nested}
\end{eqnarray}

Hence for $m_{\beta}\supset m_{\beta^*}$, with probability at least $1-\delta_n$
\[\log \Pi(M_{l^*_t,m_\beta}|\cdot) -\log \Pi(M^*|\cdot) \leq (q-t)\log \tilde{p}_n-\frac{1}{2}((q-t)(1-2\alpha {(\log n)}^{\alpha_1})\log n +d_0(q-t)\]
uniformly over $m_\beta$, where $d_0$ does not depend on $m_\beta$, and $-\log \tilde{p}_n\sim (\log n)^{1+\alpha_1}$.

Hence, choosing $\alpha>0$ small enough, summing over $m_{\beta}\supset m_{\beta^*}$,
\[\frac{\sum_{m_{\beta}\supset m_{\beta^*};\#\{m_\beta\}\leq q_n} \Pi(M_{l^*,m_\beta}|\cdot)}{\Pi(M^*|\cdot)}\leq \sum_{q> t}e^{\log (\tilde{p}_n^{(q-t)})}{d_n\choose q-t}e^{-.5(1-2\alpha {(\log n)}^{\alpha_1})(q-t)\log n}{d_1}^{q-t}\preceq \tilde{p}_n^{\alpha'} \rightarrow 0\]
as $n\rightarrow \infty$ infinity, for some $0<\alpha'<1$, choosing sufficiently small $\alpha>0$ and   here  $d_1$ is a universal constant.

Similarly, summing over  $m_{\beta} \not\supset m_{\beta^*}$ gives, for outside of a set of probability $2\delta_n$, for some constant $d'>0$, 
\[\frac{\sum_{m_{\beta}\not \supset m_{\beta^*};\#\{m_\beta\}\leq q_n} \Pi(M_{l^*_t,m_\beta}|\cdot)}{\Pi(M^*|\cdot)}\preceq e^{-d'n} \rightarrow 0.\]

As the results are shown  outside a set of probability $O(\delta_n)$ where $\delta_n\rightarrow 0$ set, this proves our claim.  

\subsection*{\underline{Part 2:$\frac{ \sum_{m_\beta:m_\beta\neq m_{\beta^*}}\Pi(M^{\xi}_{l,m_\beta}|\cdot)}{\Pi(M^*|\cdot)}\rightarrow 0$}}

Using calculation from Theorem \ref{bfctr_vrsel}, using $\xi$ in place of $\epsilon_n$, or repeating the argument of Theorem \ref{bfctr_vrsel} proof  for Case (ii) of Theorem \ref{thm1} proof, we can bound the cross product terms, and bound the Chi-square term $\|X_j\hat{\beta}^j-X_jE[\hat{\beta}^j]\|^2$ as in  Part 1 in Theorem \ref{vr_selection2} or in Theorem \ref{bfctr_vrsel}, outside a set of probability approaching zero, and therefore, outside the small probability set, we have $\log  L({\bf Y}_n|M^\xi_{l,m_\beta})- \log L({\bf Y}_n|M^*_{l_{\tilde{t}}})\preceq -n$ uniformly over covariate choices.  Here, $M^*_{l_{\tilde{t}}}$ denote the model with true covariate combination and  change points at $\tilde{t_1},\cdots,\tilde{t}_l$ for any refinement of true partition corresponding to $t^*_1,\cdots,t_{l^*}^*$.  

Hence,  we have  $\frac{\sum_{m_\beta\neq m_{\beta^*}}\Pi(M^\xi_{l,m_\beta}|\cdot)}{\Pi(M^*_{l_{\tilde{t}}}|\cdot)}\leq  \sum^{q_n}_{q= t+1}e^{\log (\tilde{p}_n^{(q-t)})}{d_n\choose q-t}e^{-c'n}\rightarrow 0$ in probability, where $c'>0$ is a constant.  We already have shown that $\frac{\Pi(M^*_{l_{\tilde{t}}}|\cdot)}{\Pi(M^*|\cdot)}\rightarrow 0$ in probability, which proves our claim. 

\subsection*{\underline{Part 3:$\frac{ \sum_{m_\beta:m_\beta\neq m_{\beta^*}}\Pi(M^{}_{l,m_\beta}|\cdot)}{\Pi(M^*|\cdot)}\rightarrow 0$; $l<l^*$}}

Using the calculation from the proof of Theorem \ref{bfctr_vrsel} we  bound the cross product terms and Chi-square term  over all covariate choice, as in last part, and using Case (iv) in Theorem \ref{thm1} proof,  outside a set with probability approaching zero, $\log  L({\bf Y}_n|M_{l,m_\beta})- \log L({\bf Y}_n|M^*)\preceq -n$ uniformly over covariate choices. Hence, $\frac{ \sum_{m_\beta:m_\beta\neq m_{\beta^*}}\Pi(M^{}_{l,m_\beta}|\cdot)}{\Pi(M^*|\cdot)}\preceq e^{-c''n}\rightarrow 0$ in probability, where $c''>0$.

\section*{Proof of Theorem \ref{bfctr_vrsel}}
Note that for a subset $\mathscr{S}$ of $i=1,\cdots, n$, of cardinality $n_s$ and for $ P(|\frac{1}{\sqrt{n_s}}\sum_{i \in \mathscr{S}}e_ix_{ij}|\geq \sqrt{q}\log n) \leq e^{-cq(\log n)^2}$ for a universal constant $c>0$, for $q$ many covariates. Similar bound can be derived for each coordinate for $\sqrt{n}(\hat{\beta}^j-E[\hat{\beta}^j])$. Hence, over all possible covariate and change point choices $|E_1|, |E_2|,|E_3|\preceq  max\{\sqrt{n}\epsilon_nq^2_n\sqrt{q_n}\log n,\sqrt{n\epsilon_n}q_n\sqrt{q_n}\log n\}$ outside a set of probability at most $\tilde{\delta}^n\sim n^{l^*}\sum_{q=1}^{q_n} d_n^qe^{-c q(\log n)^2}\rightarrow 0$.  

 Again, $\|X_j\hat{\beta}^j-X_jE[\hat{\beta}^j]\|^2\leq q+ q(\log n)^{1+\alpha_1}$ outside a set of probability approaching zero from Part 1 of Theorem \ref{vr_selection2}. Therefore, outside a set of probability approaching zero,
\begin{align*}
-\sigma^2\log L({\bf Y}_n|M^{\epsilon_n}_{l^*,m_\beta})&+\sigma^2 \log L({\bf Y}_n|M^*)\succeq \\
& n\epsilon_n -\sqrt{n}\epsilon_n q_n^{2.5}\log n -\sqrt{n\epsilon_n}q^{1.5}_n \log n+q_n\log n\rightarrow \infty.
\end{align*}
Note that in the above equation, for $\epsilon_n$,  we have an universal lower bound of the order of $n\epsilon_n$ for the quadratic term corresponding to $ \frac{1}{2}\sum_{j:P_j\cap P_j^*\neq \phi}({\tilde{\beta}}^{*j}-E[\hat{\beta}^j])'X_{j\cap j^*}'X_{j\cap j^*}({\tilde{\beta}}^{*j}-E[\hat{\beta}^j])+ \frac{1}{2}\sum_j\sum_{k\neq j: P_j\cap P_k^*\neq \phi}({\tilde{\beta}}^{*k}-E[\hat{\beta}^j])'X'_{j\cap k^*}X_{j\cap k^*}(\tilde{\beta}^{*k}-E[\hat{\beta}^j])$  from equations \eqref{decomp11} and \eqref{rate_decomp} over all covariate and change point choices, as a result of the Eigen value condition given in $A6$.

Hence, summing over possible covariate choices
\[\sum_{m_\beta \supset m_{\beta^*} }P_n(M^{\epsilon_n}_{l^*,m_\beta}:M^*)\preceq q_ne^{-\alpha'n\epsilon_n +q_n\log d_n}\rightarrow 0\] in probability uniformly over change point choices,  where $\alpha'>0$ is a constant.

Similarly, for $m_\beta \not \supset m_{\beta^*}$, outside a set with probability approaching zero, we have  \begin{eqnarray*}
-\sigma^2\log L({\bf Y}_n|M^{\epsilon_n}_{l^*,m_\beta})&+&\sigma^2 \log L({\bf Y}_n|M^*)\succeq n, \end{eqnarray*}
from calculation similar to that of leading to   equation \eqref{non_nested}.
Hence, $\sum_{m_\beta \not \supset m_{\beta^*} }P_n(M^{\epsilon_n}_{l^*,m_\beta}:M^*)\rightarrow 0$  which concludes our claim.
\subsection*{Addressing Remark \ref{diff_var2}}
Remark \ref{diff_var2} follows from the proofs of Theorems \ref{thm1_emp} and \ref{thm2_emp}.

\section*{Proof of Theorem \ref{vr_selection1}}
Proof of Theorem \ref{vr_selection1} follows directly from the proof of Theorem \ref{vr_selection2}.

\bibliographystyle{biom}
\bibliography{changepoint_refs,hs-review}

\end{document}